\newcommand{\lyxmathsym}[1]{\ifmmode\begingroup\def\b@ld{bold}
  \text{\ifx\math@version\b@ld\bfseries\fi#1}\endgroup\else#1\fi}
\theoremstyle{plain}
\newtheorem{thm}{Theorem}
  \theoremstyle{remark}
  \newtheorem{rem}{Remark}
  \theoremstyle{plain}
  \newtheorem{lem}{Lemma}
\begin{document}

\title{Minimum Communication Cost for Joint Distributed Source Coding and Dispersive Information Routing}

\author{Kumar Viswanatha, ~\IEEEmembership{Student Member,~IEEE,} Emrah Akyol, ~\IEEEmembership{Student Member,~IEEE}\\ and Kenneth Rose, ~\IEEEmembership{Fellow,~IEEE}
\thanks{Authors are with the Department of Electrical and Computer Engineering, University of California, Santa Barbara, CA, 93106 USA e-mail: {kumar, eakyol, rose} @ece.ucsb.edu}\\
\thanks{The work was supported by the NSF under grants CCF-0728986, CCF - 1016861 and CCF-1118075. The material in this paper was presented in part at the IEEE international symposium on information theory at Austin, USA, Jun 2010, IEEE information theory workshop at Dublin, Ireland, Aug 2010  and the IEEE international symposium on information theory at Saint Petersburg, Russia, Aug 2011}}


\maketitle
\begin{abstract}
This paper considers the problem of minimum cost communication of
correlated sources over a network with multiple sinks, which consists of distributed source coding followed by routing. We introduce a new routing paradigm called dispersive information
routing, wherein the intermediate nodes are allowed to `split' a packet
and forward subsets of the received bits on each of the forward paths.
This paradigm opens up a rich class of research problems which focus
on the interplay between encoding and routing in a network. Unlike
conventional routing methods such as in \cite{Networked_Slepian_Wolf}, dispersive
information routing ensures that each sink receives just the information 
needed to reconstruct the sources it is required to reproduce. We demonstrate
using simple examples that our approach offers better asymptotic performance
than conventional routing techniques. This paradigm leads to a new
information theoretic setup, which has not been studied earlier. We
propose a new coding scheme, using principles from multiple descriptions
encoding \cite{VKG} and Han and Kobayashi decoding \cite{Han_Kobayashi}.
We show that this coding scheme achieves the complete rate region
for certain special cases of the general setup and thereby achieves the
minimum communication cost under this routing paradigm. \end{abstract}
\begin{IEEEkeywords}
Distributed source coding, Minimum cost routing, Compression of correlated sources
\end{IEEEkeywords}

\section{Introduction\label{sec:Introduction}}

Compression of sources in conjunction with communication over a network
has been an important research area, notably with the recent advancements
in distributed compression of correlated sources and network (routing)
design, coupled with the deployment of various sensor networks. Encoding
correlated sources in a network, such as a sensor network with multiple
nodes and sinks as shown in Fig. \ref{fig:A-general-multi-sink},
has conventionally been approached from two different directions.
The first approach is routing the information from different sources
in such a way as to efficiently re-compress the data at intermediate
nodes without recourse to distributed source coding (DSC) methods
(we refer to this approach as joint coding via `explicit communication').
Such techniques tend to be wasteful at all but the last hops of the
communication path. The second approach performs DSC followed by simple
routing. Well designed DSC followed by optimal routing can provide
good performance gains. We will focus on the latter category.
Relevant background on DSC and route selection in a network is given
in the next section.

\begin{figure}
\centering\includegraphics[scale=0.35]{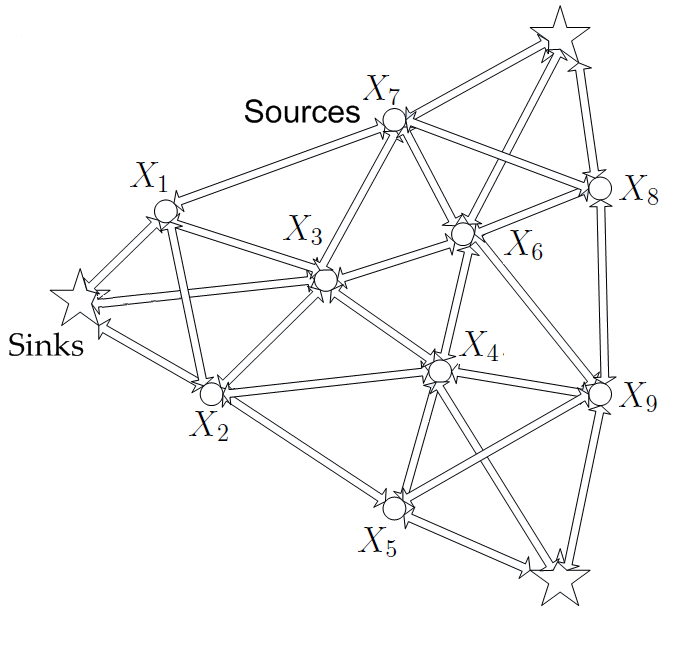}\caption{A general multi-source multi-sink sensor network. The circles denote sources and stars denote sinks. The arrows denote allowed communication links.\label{fig:A-general-multi-sink}}

\end{figure}

This paper focuses on minimum cost communication of correlated sources
over a network with multiple-sinks. We introduce a new routing paradigm
called Dispersive Information Routing (DIR), wherein intermediate
nodes are allowed to ``split a packet'' and forward a subset of
the received bits on each of the forward paths. This paradigm opens
up a rich class of research problems which focus on the interplay
between encoding and routing in a network. What makes it particularly
interesting is the challenge in encoding sources such that exactly
the required information is routed to each sink, to reconstruct the
 prescribed subset of sources. We will show, using simple examples
that asymptotically, DIR achieves a lower cost over conventional routing methods,
wherein the sinks usually receive more information than they need.
This paradigm leads to a general class of information theoretic problems,
which have not been studied earlier. In this paper, we formulate this
problem and the associated rate region. We introduce a new (random)
coding technique using principles from multiple descriptions encoding
and Han and Kobayashi decoding, which leads to an achievable rate
region for this problem. We show that this achievable rate region
is complete under certain special scenarios.

The rest of the paper is organized as follows. In Section \ref{sec:Prior_results},
we review prior work related to distributed source coding and network
routing. Before stating the problem formally, in Section \ref{sec:DIR_simple_net},
we provide 2 simple examples to demonstrate the basic principles behind
DIR and the new encoding scheme. We also demonstrate the suboptimality
of conventional routing methods using these simple examples. In Section
\ref{sec:General_setup}, we formally state the DIR problem and provide
an achievable rate region. Finally, in Section \ref{sec:Outerbounds-to-certain},
we show that this achievable rate region is complete for some special
cases of the setup.

\section{Prior Work\label{sec:Prior_results}}

Multi-terminal source coding has one of its early roots in the seminal
work of Slepian and Wolf \cite{Slepian_Wolf}. They showed, in the
context of lossless coding, that side-information available only at
the decoder can nevertheless be fully exploited as if it were available
to the encoder, in the sense that there is no asymptotic performance
loss. Later, Wyner and Ziv \cite{Wyner_Ziv} derived a lossy coding
extension that bounds the rate-distortion performance in the presence
of decoder side information. Extensive work followed considering different
network scenarios and obtaining achievable rate regions for them,
including \cite{berger_multi_term,Cover,EGC,Tung,Wagner,Wyner_CI,ZB,GW,Wyner_Source_coding}.
Han and Kobayashi \cite{Han_Kobayashi} extended the Slepian-Wolf
result to general multi-terminal source coding scenarios. For a multi-sink
network, with each sink reconstructing a prespecified subset of the sources, they characterized
an achievable rate region for lossless reconstruction of the required 
sources at each sink. Csisz$\mathrm{\acute{a}}$r and K$\mathrm{\ddot{o}}$rner \cite{Korner_Source_Networks}
provided an alternative characterization of the achievable rate region for the same setup by relating the region to the solution of a class of problems called the ``entropy characterization problems''.

There has also been a considerable amount of work on joint compression-routing
for networks. A survey of routing techniques for sensor networks is
given in \cite{Luo}. It was shown in \cite{NP_Completeness} that
the problem of finding the optimum route for compression using explicit
communication is an NP-complete problem. \cite{pattam_08} compared
different joint compression-routing schemes for a correlated sensor
grid and also proposed an approximate, practical, static source clustering
scheme to achieve compression efficiency. Much of the above work is
related to compression using explicit communication, without recourse
to distributed source coding techniques. Cristescu et al. \cite{Networked_Slepian_Wolf}
considered joint optimization of Slepian-Wolf coding and a routing
mechanism, we call `broadcasting'\footnote{Note that we loosely use the term `broadcasting' instead of `multicasting'
to stress the fact that \textbf{\textit{all}} the information transmitted
by any source is routed to every sink that reconstructs the source.
Also, our approach to routing is in some aspects, a variant of multicasting.%
}, wherein each source broadcasts its information to all sinks that
intend to reconstruct it. Such a routing mechanism is motivated from
the extensive literature on optimal routing for independent sources
\cite{Cormen}. \cite{distance_entropy} proved the general optimality
of that approach for networks with a single sink. We demonstrated its sub-optimality for the multi-sink scenario, recently in \cite{ISIT10}. This paper takes a step further towards finding the best joint compression-routing
mechanism for a multi-sink network. We note that a preliminary version of our results appeared in \cite{ITW10} and \cite{ISIT_2011_DIR}.

We note the existence of a volume of work on minimum cost network
coding for correlated sources, e.g. \cite{NC_Cost,Ramamoorthy}. But
the routing mechanism we introduce in this paper does not require
possibly complex network coders at intermediate nodes, and can be
realized using simple conventional routers. The approach does have
potential implications on network coding, but these are beyond the
scope of this paper.

\section{Dispersive Information Routing - Simple Networks \label{sec:DIR_simple_net}}

\subsection{Basic Notation\label{sub:Basic-Notation}}

We begin by introducing the basic notation. In what follows, $2^{\mathcal{S}}$
denotes the set of all subsets (power set) of any set $\mathcal{S}$
and $|\mathcal{S}|$ denotes the set cardinality. Note that $|2^{\mathcal{S}}|=2^{|\mathcal{S}|}$.
$\mathcal{S}^{c}$ denotes the set complement (the universal set will
be specified when there is ambiguity) and $\phi$ denotes the null
set. For two sets $\mathcal{S}_{1}$ and $\mathcal{S}_{2}$, we denote
the set difference by $\mathcal{S}_{1}-\mathcal{S}_{2}=\{s:s\in\mathcal{S}_{1},s\notin\mathcal{S}_{2}\}$.
Random variables are denoted by upper case letters (for example $X$)
and their realizations are denoted by lower case letters (for example
$x$). We also use upper case letters to denote source nodes and sinks
and the ambiguity will be clarified wherever necessary. A sequence of $n$ independent
 and identically distributed (iid) random variables and its realization are denoted by $X^{n}$ and $x^{n}$,  respectively.
The length $n$, $\epsilon$-typical set is denoted by $\mathcal{T}_{\epsilon}^{n}$.
$X\leftrightarrow Y\leftrightarrow Z$ denotes that the three random
variables $(X,Y,Z)$ form a Markov chain in that order. Notation in
\cite{Cover-book} is used to denote standard information theoretic
quantities.

\subsection{Illustrative example - No helpers case\label{sub:Motivating-example_1}}

\begin{figure}
\centering\includegraphics[scale=0.3]{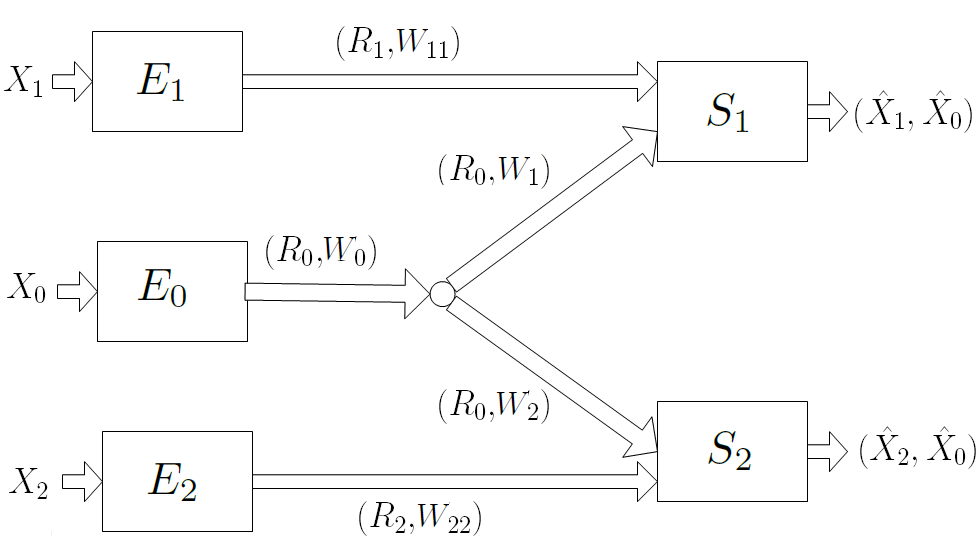}\caption{Example 1 - Conventional Routing\label{fig:eg1_convr}}

\end{figure}

Consider the network shown in Fig. \ref{fig:eg1_convr}. There are
three source nodes, $E_{0}$, $E_{1}$ and $E_{2}$ and two sinks $S_{1}$
and $S_{2}$. The three source nodes observe correlated memoryless sequences $X_{0}^n,X_{1}^n$ and $X_{2}^n$, respectively.
Sink $S_{1}$ reconstructs the pair $(X_{0}^n,X_{1}^n)$, while $S_{2}$
reconstructs $(X_{0}^n,X_{2}^n)$. $E_{0}$ communicates with the
two sinks through an intermediate node (called the `collector') which
is functionally a simple router. The edge weights on each path in
the network are as shown in the figure. The cost of communication through
an edge, $e$, is a function of the bit rate flowing through it, denoted
by $R_{e}$ and the corresponding edge weight, denoted by $W_{e}$,
which in this paper, we will assume for simplicity to be a simple product $C(R_{e},W_{e})=R_{e}W_{e}$, noting that the approach is directly extendible to
more complex cost functions. We further assume that the total cost
is the sum of individual communication cost over each edge. The objective
is to find the minimum total communication cost for lossless transmission 
of sources to the respective sinks.

We first consider the communication cost when broadcast routing is
employed \cite{Networked_Slepian_Wolf} wherein the routers forward
all the bits received from a source to all the decoders that would
reconstruct it. In other words, routers are not allowed to {}``split''
a packet and forward a portion of the received information on the
forward paths. Hence the branches connecting the collector to the
two sinks carry the same rates as the branch connecting $E_{0}$
to the collector. We denote the rate at which $X_{0}$, $X_{1}$ and
$X_{2}$ are encoded by $R_{0}$, $R_{1}$ and $R_{2}$, respectively.

Using results in \cite{Networked_Slepian_Wolf}, it can be shown that
the minimum communication cost under broadcast routing is given by
the solution to the following linear programming formulation:\begin{equation}
C_{br}=\min\{(W_{0}+W_{1}+W_{2})R_{0}+W_{11}R_{1}+W_{22}R_{2}\}\label{eq:conv_obj}\end{equation}
 under the constraints:\begin{eqnarray}
R_{0} & \geq & \max(H(X_{0}|X_{1}),H(X_{0}|X_{2}))\nonumber \\
R_{1} & \geq & H(X_{1}|X_{0})\nonumber \\
R_{2} & \geq & H(X_{2}|X_{0})\nonumber \\
R_{1}+R_{0} & \geq & H(X_{0},X_{1})\nonumber \\
R_{2}+R_{0} & \geq & H(X_{0},X_{2})\label{eq:conv_constr}\end{eqnarray}

To gain intuition into dispersive information routing, we will later
consider a special case of the above network when the branch weights
are such that $W_{11},W_{22}\ll W_{0},W_{1},W_{2}$. Let us specialize
the above equations for this case. The constraint $W_{11},W_{22}\ll W_{0},W_{1},W_{2}$,
implies that $X_{1}$ and $X_{2}$ should be encoded at rates $R_{1}=H(X_{1})$
and $R_{2}=H(X_{2})$, respectively. Therefore the scenario effectively
captures the case when $X_{1}$ and $X_{2}$ are available as side
information at the respective decoders. It follows from (\ref{eq:conv_obj})
and (\ref{eq:conv_constr}) that for achieving minimum communication
cost, $R_{0}$ is:\begin{equation}
R_{0}^{*}=\max\left\{ H(X_{0}|X_{1}),H(X_{0}|X_{2})\right\} \label{eq:no_DIR_Ro}\end{equation}
 and therefore the minimum communication cost is given by:\begin{eqnarray}
C_{br}^{*} & = & (W_{0}+W_{1}+W_{2})R_{0}^{*}\nonumber \\
 &  & +W_{11}H(X_{1})+W_{22}H(X_{2})\label{eq:no_DIR_cost}\end{eqnarray}
Is this the best we can do? The collector has to transmit enough information
to sink $S_{1}$ for it to decode $X_{0}$ and therefore the rate
is at least $H(X_{0}|X_{1})$. Similarly the rate on the branch connecting
the collector to $S_{2}$ is at least $H(X_{0}|X_{2})$. But if $H(X_{0}|X_{1})\neq H(X_{0}|X_{2})$,
there is excess rate on one of the branches.

Let us now relax this restriction and allow the collector node to
{}``split\textquotedblright{} the packet and route different subsets
of the received bits on the forward paths. We could equivalently think
of the source $E_{0}$ transmitting 3 smaller packets to the collector;
the first packet has a rate $R_{0,\{1,2\}}$ bits and is destined to both
sinks. Two other packets have rates $R_{0,1}$ and $R_{0,2}$ and
are destined to sinks $S_{1}$ and $S_{2}$, respectively. Technically,
in this case, the collector is again a simple conventional router.

We refer to such a routing mechanism, where each intermediate node transmits
a subset of the received bits on each of the forward paths, as {}``\textbf{\textit{Dispersive
Information Routing}}\textquotedblright{} (DIR). Note that unlike
network coding, DIR does not require possibly expensive coders at
intermediate nodes, and can always be realized using conventional
routers, with each source transmitting multiple packets into the network
intended to different subsets of sinks. Hereafter, we interchangeably
use the ideas of {}``packet splitting'' at intermediate nodes and
conventional routing of smaller packets, noting the equivalence in
achievable rates and costs. This scenario is depicted in Fig. \ref{fig:Example---DIR}
with the modified cost each packet encounters.

\begin{figure}
\centering\includegraphics[scale=0.3]{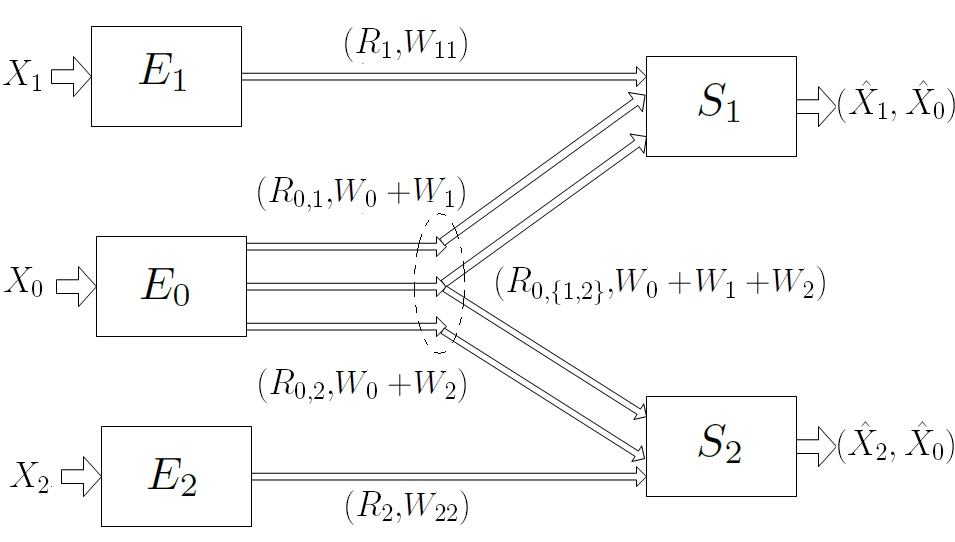}\caption{Example - DIR\label{fig:Example---DIR}. Note that the notion of `packet
splitting' is equivalent to the sources transmitting multiple smaller
packets}

\end{figure}

Two obvious questions arise - Does DIR achieve a lower communication
cost compared to conventional routing? If so, what is the minimum
communication cost under DIR?

We first aim to find the minimum cost using DIR under the special
case of $W_{11},W_{22}\ll W_{0},W_{1},W_{2}$ (i.e., $R_{1}=H(X_{1})$
and $R_{2}=H(X_{2})$). To establish the minimum communication cost
we need to first establish the complete achievable rate region for
the rate tuple $\{R_{0,1},R_{0,\{1,2\}},R_{0,2}\}$ for lossless reconstruction
of $X_{0}^n$ at both the decoders and then find the point in the achievable
rate region that minimizes the total communication cost, determined
using the modified weights shown in Fig. \ref{fig:Example---DIR}.
Before deriving the ultimate solution, it is instructive to consider one
operating point, $P_{1}\triangleq\{R_{0,1},R_{0,\{1,2\}},R_{0,2}\}=\{I(X_{1};X_{0}|X_{2}),H(X_{0}|X_{1},X_{2}),I(X_{2};X_{0}|X_{1})\}$ 
and provide the coding scheme that achieves it. Extension to other
{}``interesting points'' and to the whole achievable region follows
in similar lines. This particular rate point is considered first due
to its intuitive appeal as shown in a Venn diagram (Fig. \ref{fig:Venn-Diagram-based}a).

\begin{figure}
~~~~\includegraphics[scale=0.2]{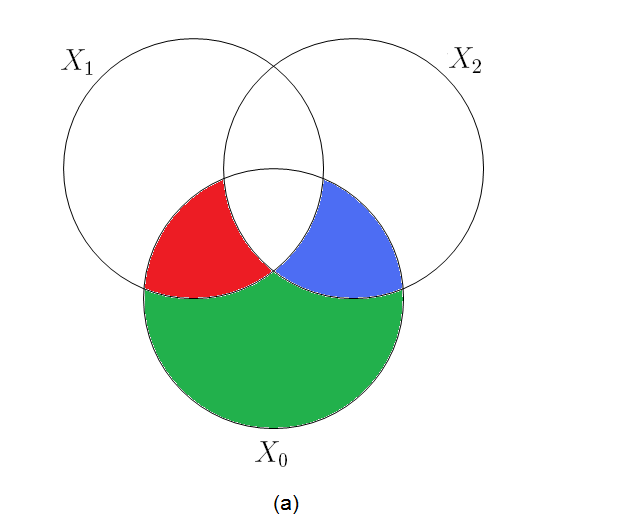}~~~~~~\includegraphics[scale=0.2]{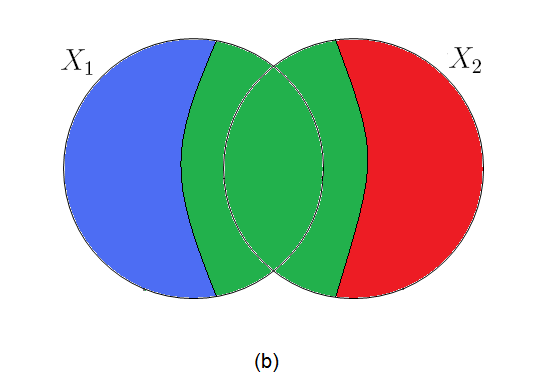}\caption{Venn Diagram based intuition: (a) Amount of information routed using DIR when operating at point $P_1$. Observe that each of the sinks receive information at the respective minimum rates. Green represents $R_{0,12}$, Blue represents $R_{0,1}$ and Red represents $R_{0,2}$ (b) Intuitive representation of Wyner's common information. Observe that in Wyner's setup, it is generally not possible to split the information exactly and that there is a rate loss due to transmitting the common bit stream.\label{fig:Venn-Diagram-based}}

\end{figure}

Gray and Wyner considered a closely resembling network \cite{GW}
shown in Fig. \ref{fig:Wyner's-Setup}. In their setup, the encoder
observes iid sequences of $2$ correlated random variables $(X_{1},X_{2})$ and transmits $3$
packets (at rates $R_{0,1},R_{0,\{1,2\}},R_{0,2}$, respectively),
one meant for each subset of sinks. The two sinks reconstruct sequences 
$X_{1}^n$ and $X_{2}^n$, respectively. They showed that the rate tuple
$\{R_{0,1},R_{0,\{1,2\}},R_{0,2}\}=\{H(X_{1}|X_{2}),I(X_{1};X_{2}),H(X_{2}|X_{1})\}$
is not achievable in general and that there is a rate loss due to
transmitting a common bit stream; in the sense that individual decoders
must receive more information than they need to reconstruct their
respective sources if the sum rate is maintained at minimum. Wyner
defined the term {}``Common Information'' \cite{Wyner_CI}, here
denoted by $C_{W}(X_{1};X_{2})$ as the minimum rate $R_{0,\{1,2\}}$
such that $\{R_{0,1},R_{0,\{1,2\}},R_{0,2}\}$ is achievable and $R_{0,1}+R_{0,\{1,2\}}+R_{0,2}=H(X_{1},X_{2})$.
He also showed that $C(X_{1};X_{2})=\min I(X_{1},X_{2};U)$ where
the $\min$ is taken over all auxiliary random variables $U$ such
that $X_{1}\leftrightarrow U\leftrightarrow X_{2}$ form a Markov
chain. He further showed that, in general, $I(X_{1};X_{2})\leq C_{w}(X_{1};X_{2})\leq\max(H(X_{1}),H(X_{2}))$.
We note in passing, the existence of an earlier definition of common information by G$\mathrm{\acute{a}}$cs and  K$\mathrm{\ddot{o}}$rner \cite{Korner_CI}
which measures the maximum shared information that can be fully utilized
by both the decoders. It is less relevant to dispersive information
routing.

\begin{figure}
\centering\includegraphics[scale=0.25]{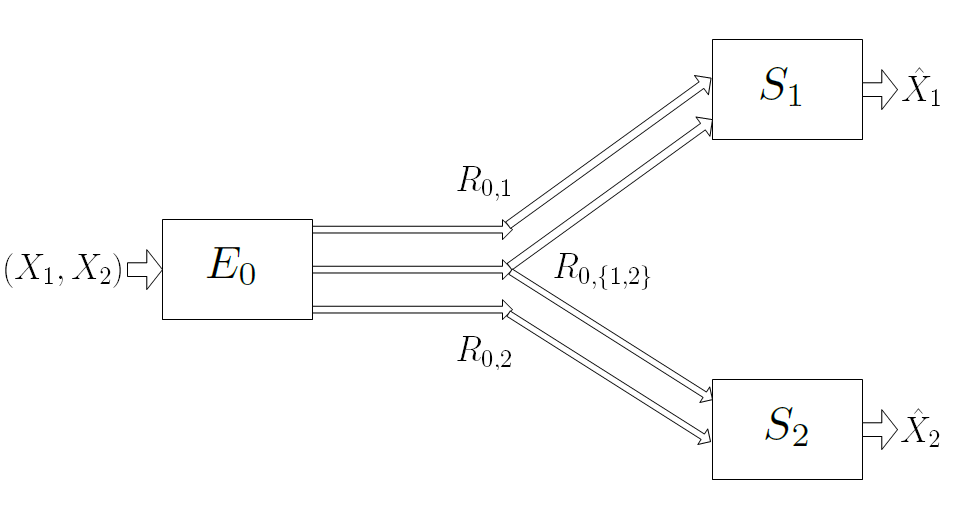}\caption{Gray-Wyner Setup\label{fig:Wyner's-Setup}. Note the resemblance to the
DIR setup in Fig. \ref{fig:Example---DIR}}

\end{figure}

At first glance, it might be tempting to extend Wyner's argument to
the DIR setting and say $P_{1}$ is not achievable in general, i.e.,
each decoder has to receive more information than it needs. But interestingly
enough, a rather simple coding scheme achieves this point and simple
extensions of the coding scheme can achieve the entire rate region
for this example. The primary difference between Gray-Wyner network
and DIR is that in their setup two correlated sources are encoded
jointly for separate decoding at each sink. However, in our setup,
 $X_{0}^n$ is encoded for lossless decoding at both the sinks.
Note that this section only provides intuitive arguments to
support the result. A coding scheme will be formally derived in section
\ref{sec:General_setup}, for the general setup.

We concentrate on encoding at $E_{0}$ assuming that $E_{1}$ and $E_{2}$ transmit at their respective source entropies.
$E_{0}$ observes a sequence of $n$ iid random variables $X_{0}^n$. This sequence belongs to the typical set, $\mathcal{T}_{\epsilon}^{n}$,
with high probability. Every typical sequence is assigned $3$ indices,
each independent of the other. The three indices are assigned using
uniform pmfs over $[1:2^{nR_{_{0,1}}}]$, $[1:2^{nR_{0,\{1,2\}}}]$
and $[1:2^{nR_{0,2}}]$, respectively. All the sequences with the same
first index, $m_{0,1}$, form a bin $\mathcal{B}_{0,1}(m_{0,1})$. Similarly
bins $\mathcal{B}_{0,2}(m_{0,2})$ and $\mathcal{B}_{0,12}(m_{0,12})$ are formed
for all indices $m_{0,2}$ and $m_{0,12}$, respectively. Upon observing a sequence $x_{0}^{n}\in\mathcal{T}_{\epsilon}^{n}$
with indices $m_{0,1},m_{0,2}$ and $m_{0,12}$, the encoder transmits index
$m_{0,1}$ to decoder $1$ alone, index $m_{0,2}$ to decoder $2$ alone
and index $m_{0,12}$ to both the decoders.

The first decoder receives indices $m_{0,1}$ and $m_{0,12}$. It tries
to find a typical sequence $\hat{x}_{0}^{n}\in\mathcal{B}_{0,1}(m_{0,1})\cap\mathcal{B}_{0,12}(m_{0,12})$
which is jointly typical with the decoded information sequence $x_{1}^{n}$.
As the indices are assigned independent of each other, every typical
sequence has uniform pmf of being assigned to the index pair $\{m_{0,1},m_{0,12}\}$
over $[1:2^{n(R_{0,1}+R_{0,\{1,2\}})}]$. Therefore, having received
indices $m_{0,1}$ and $m_{0,12}$, using arguments similar to Slepian-Wolf \cite{Slepian_Wolf} 
and Cover \cite{Cover}, the probability of decoding
error asymptotically approaches zero if:\begin{equation}
R_{0,1}+R_{0,\{1,2\}}\geq H(X_{0}|X_{1})\label{eq:eg1_ach_1}\end{equation}
 Similarly, probability of decoding error approaches zero at the second
decoder if:\begin{equation}
R_{0,2}+R_{0,\{1,2\}}\geq H(X_{0}|X_{2})\label{eq:eg1_ach_2}\end{equation}
 Clearly (\ref{eq:eg1_ach_1}) and (\ref{eq:eg1_ach_2}) imply that
$P_{1}$ is achievable. In similar lines to \cite{Slepian_Wolf,Cover},
the above achievable region can also be shown to satisfy the converse
and hence is the complete achievable rate region for this problem.
We term such a binning approach as `Power Binning' as an independent
index is assigned to each (non-trivial) subset of the decoders - the power
set. It is worthwhile to note that the same rate region can be obtained
by applying results of Han and Kobayashi \cite{Han_Kobayashi}, assuming
3 independent encoders at $E_{0}$, albeit with a more complicated
coding scheme involving multiple auxiliary random variables (see also
\cite{GW_side}). We also note that the mechanism of assigning multiple
independent random bin indices has been used is several related prior
work, such as \cite{Wyner_satellite,BC_partial_receiver_side}.

The minimum cost operating point is the point that satisfies equations
(\ref{eq:eg1_ach_1}) and (\ref{eq:eg1_ach_2}) and minimizes the
cost function:\begin{eqnarray}
C_{DIR-SI}^{*} & = & \min\,\,\,\,\{(W_{0}+W_{1})R_{0,1}+(W_{0}+W_{2})R_{0,2}\nonumber \\
 &  & +(W_{0}+W_{1}+W_{2})R_{0,\{1,2\}}\}\label{eq:eg1_min_cost}\end{eqnarray}

The solution is either one of the two points $P_{2}\triangleq\{0,H(X_{0}|X_{1}),H(X_{0}|X_{2})-H(X_{0}|X_{1})\}$
or $P_{3}\triangleq\{H(X_{0}|X_{1})-H(X_{0}|X_{2}),H(X_{0}|X_{2}),0\}$ and
both achieve lower total communication cost compared to broadcast
routing, $C_{conv}^{*}$ in (\ref{eq:no_DIR_cost}), for any
$W_{0},W_{1},W_{2}\gg W_{11},W_{22}$ if $H(X_{0}|X_{1})\neq H(X_{0}|X_{2})$.

The above coding scheme can be easily extended to the case of arbitrary
edge weights. Then, the rate region for the tuple $\{R_{1},R_{2},R_{0,1},R_{0,\{1,2\}},R_{0,2}\}$
and the cost function to be minimized are given by:

\begin{eqnarray}
C_{DIR}^{*}=\min\,\,\,\,\{W_{11}R_{1}+W_{22}R_{2}+(W_{0}+W_{1})R_{0,1}\nonumber \\
+(W_{0}+W_{2})R_{0,2}+(W_{0}+W_{1}+W_{2})R_{0,\{1,2\}}\}\label{eq:min_cost_eg2}\end{eqnarray}
 under the constraints:\begin{eqnarray}
R_{1} & \geq & H(X_{1}|X_{0})\nonumber \\
R_{0,1}+R_{0,\{1,2\}} & \geq & H(X_{0}|X_{1})\nonumber \\
R_{1}+R_{0,1}+R_{0,\{1,2\}} & \geq & H(X_{0},X_{1})\nonumber \\
R_{2} & \geq & H(X_{2}|X_{0})\nonumber \\
R_{0,2}+R_{0,\{1,2\}} & \geq & H(X_{0}|X_{2})\nonumber \\
R_{2}+R_{0,2}+R_{0,\{1,2\}} & \geq & H(X_{0},X_{2})\label{eq:min_cost_eg2_constraints}\end{eqnarray}
 If $R_{1}=H(X_{1})$ and $R_{2}=H(X_{2})$, (\ref{eq:min_cost_eg2_constraints})
specializes to (\ref{eq:eg1_ach_1}) and (\ref{eq:eg1_ach_2}). Also,
it can easily be shown that the total communication cost obtained
as a solution to the above formulation is lower than that for
conventional routing if $W_{0},W_{1},W_{2}>0$. This example clearly
demonstrates the gains of DIR over broadcast routing to communicate
correlated sources over a network.

Observe that in the above example, the sinks only receive information
from the source nodes they intend to reconstruct. Such a scenario
is called the `No helpers' case in the literature \cite{Korner_Source_Networks}.
In a network with multiple sources and sinks, if source $i$ is to
be reconstructed at a subset of sinks $\Pi_{i}$, power binning assigns
$2^{|\Pi_{i}|}-1$ independently generated indices, each being routed
to a subset of $\Pi_{i}$. It will be shown later in section \ref{sec:Outerbounds-to-certain}
that power binning achieves minimum cost under DIR, even for a general
 setup, as long as there are no helpers, i.e., when each sink is allowed to receive information
only from the requested sources. However, the problem of establishing
the complete achievable rate region becomes considerably harder when
every source is allowed to communicate with every sink, a scenario, that is highly relevant to practical networks. It was shown in \cite{ISIT10} that for certain
networks, unbounded gains in communication cost are obtained when
 source nodes are allowed to communicate with sinks that do not reconstruct them. In this paper, we derive an achievable rate
region for this setup. In the following subsection, to keep the notations
and understanding simple, we begin with one of the simplest setups
which illustrates the underlying ideas.

\subsection{A simple network with helpers\label{sec:2_s_2_s}}

\begin{figure}
\centering\includegraphics[scale=0.3]{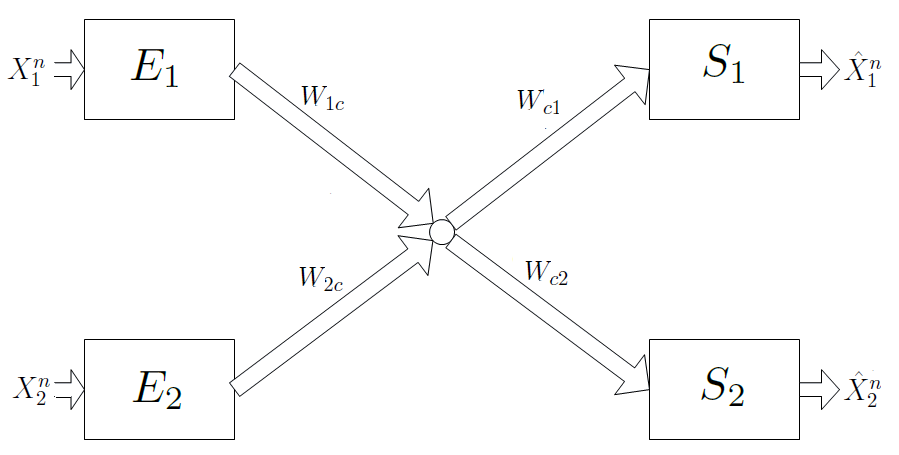}\caption{The 2 Source - 2 Sink example. Each source acts as the principle source
for one sink and as a helper for the other\label{fig:2_s_2_s}}

\end{figure}

We will again provide only intuitive description for the encoding
scheme here and defer the formal proofs for the general case to section
\ref{sec:General_setup}. Consider the network shown in Fig. \ref{fig:2_s_2_s}.
Two source nodes $E_{1}$ and $E_{2}$ observe correlated memoryless sequences $X_{1}^n$ and $X_{2}^n$, respectively. Two sinks
$S_{1}$ and $S_{2}$ require lossless reconstructions of $X_{1}^n$
and $X_{2}^n$, respectively. The source nodes can communicate with the sinks
only through a collector node. The edge weights are as shown in the
figure. Observe that, each source, while requested by one sink, acts as helper for the other.

Under dispersive information routing, each source transmits a packet
to every subset of sinks. In this example, $E_{1}$ sends 3
packets to the collector at rates $(R_{1,1},R_{1,2},R_{1,12})$, respectively.
The collector forwards the first packet to $S_{1}$, the second
to $S_{2}$ and the third to both $S_{1}$ and $S_{2}$. Similarly,
 $E_{2}$ sends 3 packets to the collector at rates $(R_{2,1},R_{2,2},R_{2,12})$
which are forwarded to the corresponding sinks. Our objective is to
determine the set of achievable rate tuples $(R_{1,1},R_{1,2},R_{1,12},R_{2,1},R_{2,2},R_{2,12})$
 that allows for lossless reconstruction at the two sinks. The minimum cost then follows by finding the point in the
achievable rate region which minimizes the effective communication
cost, $C_{DIR}$, given by:\begin{eqnarray}
\sum_{i=1}^{2}(W_{ic}+W_{c1}+W_{c2})R_{i,12}+(W_{2c}+W_{c1})R_{2,1}&  & \nonumber \\
 +\sum_{i=1}^{2}(W_{ic}+W_{ci})R_{i,i}+(W_{1c}+W_{c2})R_{1,2}\label{eq:2_E_2_E_cost}&  &\end{eqnarray}

A non-single letter characterization of the complete rate region is
possible using the results of Han and Kobayashi in \cite{Han_Kobayashi}.
They also provide a single-letter partial achievable rate region.
However, applicability of their result requires artificial imposition of 3 independent encoders at each source,
which is an unnecessary restriction. We present a more general achievable
rate region, which maintains the dependencies between the messages
at each encoder. Note that the source coding setup which arises out
of the DIR framework is a special case of the general problem of distributed
multiple descriptions and therefore the principles underlying the
coding schemes for distributed source coding \cite{Han_Kobayashi}
and multiple descriptions encoding \cite{VKG} play crucial roles
in deriving a coding mechanism for dispersive information routing. It is interesting to observe that, unlike the
general MD setting, the DIR framework is non-trivial even in the lossless
scenario and deriving a complete rate region for lossless reconstruction
at all the sinks is a challenging problem.

We now give an achievable region for the example in Fig. \ref{fig:2_s_2_s}.
Suppose we are given random variables $(U_{1,12},U_{1,1},U_{1,2},U_{2,12},U_{2,1},U_{2,2})$
jointly distributed with $(X_{1},X_{2})$ such that the following
Markov chain conditions hold:\begin{eqnarray}
(U_{1,12},U_{1,1}) & \leftrightarrow & X_{1}\leftrightarrow X_{2}\leftrightarrow(U_{2,12},U_{2,1})\nonumber \\
(U_{1,12},U_{1,2}) & \leftrightarrow & X_{1}\leftrightarrow X_{2}\leftrightarrow(U_{2,12},U_{2,2})\label{eq:2_E_2_E_Markov}\end{eqnarray}
 Note that the codeword indices of $U_{i,\mathcal{S}}$ are sent in
the packet from source $E_{i}$ to sinks $S_{j}:j\in\mathcal{S}$.
The encoding is divided into 3 stages.

\textit{Encoding} : We first focus on the encoding at $E_{1}$. In
the first stage, $2^{nR_{1,12}^{'}}$ codewords of $U_{1,12}$, each
of length $n$ are generated independently, with elements drawn according
to the marginal density $P(U_{1,12})$. Conditioned on each of these
codewords, $2^{nR_{1,1}^{'}}$ and $2^{nR_{1,2}^{'}}$ codewords of
$U_{1,1}$ and $U_{1,2}$ are generated according to the conditional
densities $P(U_{1,1}|U_{1,12})$ and $P(U_{1,2}|U_{1,12})$, respectively.
Codebooks for $U_{2,12},U_{2,1}$ and $U_{2,2}$ are generated at
$E_{2}$ in a similar fashion. On observing a sequence $x_{1}^{n}$,
$E_{1}$ first tries to find a codeword tuple from the codebooks of
$(U_{1,12},U_{1,1},U_{1,2})$ such that $(x_{1}^{n},u_{1,12}^{n},u_{1,1}^{n})\in\mathcal{T}_{\epsilon}^{n}$
and $(x_{1}^{n},u_{1,12}^{n},u_{1,2}^{n})\in\mathcal{T}_{\epsilon}^{n}$.
The probability of finding such a codeword tuple approaches 1 if,
\begin{eqnarray}
R_{1,12}^{'} & \geq & I(X_{1};U_{1,12})\nonumber \\
R_{1,1}^{'} & \geq & I(X_{1};U_{1,1}|U_{1,12})\nonumber \\
R_{1,2}^{'} & \geq & I(X_{1};U_{1,2}|U_{1,12})\label{eq:2_E_2_E_0}\end{eqnarray}

Let the codewords selected be denoted by ($u_{1,12},$$u_{1,1}$,$u_{1,2}$).
Similar constraints on $(R_{2,1}^{'},R_{2,2}^{'},R_{2,12}^{'})$ must
be satisfied for encoding at $E_{2}$. Denote the codewords selected
at $E_{2}$ by $(u_{2,12},u_{2,1},u_{2,2})$. It follows from (\ref{eq:2_E_2_E_Markov})
and the `Conditional Markov Lemma' in \cite{Wagner} that $(x_{1}^{n},x_{2}^{n},u_{1,12},u_{1,1},u_{2,12},u_{2,1})\in\mathcal{T}_{\epsilon}^{n}$
and $(x_{1}^{n},x_{2}^{n},u_{1,12},u_{1,2},u_{2,12},u_{2,2})\in\mathcal{T}_{\epsilon}^{n}$
with high probability.

In the second stage of encoding, each encoder uniformly divides the
$2^{nR_{i,\mathcal{S}}^{'}}$ codewords of $U_{i,\mathcal{S}}$ into
$2^{nR_{i,\mathcal{S}}^{''}}$ bins $\forall\, i\in\{1,2\}$, $\mathcal{S}\in\{1,2,12\}$.
All the codewords which have the same bin index $m$ are said to fall
in the bin $\mathcal{C}_{i,\mathcal{S}}(m)$ $\forall m\in(1\ldots2^{nR_{i,\mathcal{S}}^{''}})$.
Note that the number of codewords in bin $\mathcal{C}_{i,\mathcal{S}}(m)$
is $2^{n(R_{i,\mathcal{S}}^{'}-R_{i,\mathcal{S}}^{''})}$. If $E_{1}$
selects the codewords $(u_{1,12},u_{1,1},u_{1,2})$ in the first stage
and if the bin indices associated with $(u_{1,12},u_{1,1},u_{1,2})$
are $(m_{1,12},m_{1,1},m_{1,2})$, then index $m_{1,1}$ is routed
to sink $S_{1}$, $m_{1,2}$ to sink $S_{2}$ and $m_{1,12}$ to both
the sinks $S_{1}$ and $S_{2}$. Similarly, bin indices $(m_{2,12},m_{2,1},m_{2,2})$
are routed from $E_{2}$ to the corresponding sinks.

The third stage of encoding, resembles the `Power Binning' scheme
described in Section \ref{sub:Motivating-example_1}. Every typical
sequence of $X_{1}^{n}$ is assigned a random bin index uniformly
chosen over $[1:2^{n\tilde{R}_{1,1}}]$. All sequences with the same
index, $l_{1,1}$, form a bin $\mathcal{B}_{1,1}(l_{1,1})$ $\forall l_{1,1}\in\{1\ldots2^{n\tilde{R}_{1,1}}\}$.
Upon observing a sequence $X_{1}^{n}\in\mathcal{T}_{\epsilon}^{n}$
with bin index $l_{1,1}$, in addition to $m_{1,1}$ (from the second
stage of encoding), encoder $E_{1}$ also routes index $l_{1,1}$
to sink $S_{1}$. Similarly bin index $l_{2,2}$ is routed from $E_{2}$
to $S_{2}$ in addition to $m_{2,2}$. These bin indices are used
to reconstruct $X_{1}^{n}$ and $X_{2}^{n}$ losslessly at the respective decoders.
Note that, in a general setup, if source $i$ is to be reconstructed
at a subset of sinks $\Pi_{i}$, the source assigns $2^{|\Pi_{i}|}-1$
independently generated indices, each being routed to a subset of
$\Pi_{i}$. We also note that $U_{1,1}$ and $U_{2,2}$ can be conveniently
set to constants without changing the overall rate region. However,
we continue to use them to avoid complex notation.

\textit{Decoding} : We again focus on the first sink $S_{1}$. It
receives the indices $(m_{1,12},m_{1,1},m_{2,12},m_{2,1},l_{1,1})$.
It first looks for a pair of unique codewords from $\mathcal{C}_{1,12}(m_{1,12})$
and $\mathcal{C}_{2,12}(m_{2,12})$ which are jointly typical. Obviously,
there is at least one pair, $(u_{1,12},u_{2,12})$, which is jointly
typical. The probability that no other pair of codewords are jointly
typical approaches $1$ if:\begin{eqnarray}
(R_{1,12}^{'}-R_{1,12}^{''})+(R_{2,12}^{'}-R_{2,12}^{''})\leq I(U_{1,12};U_{2,12})\label{eq:2_E_2_E_1}\end{eqnarray}
 Noting that $(R_{1,12}^{'}-R_{1,12}^{''})\geq0$ and $(R_{2,12}^{'}-R_{2,12}^{''})\geq0$,
and applying the constraints on $R_{1,12}^{'}$ and $R_{2,12}^{'}$
from (\ref{eq:2_E_2_E_0}) we get the following constraints for $R_{1,12}^{''}$
and $R_{2,12}^{''}$:\begin{eqnarray}
R_{1,12}^{''} & \geq & I(X_{1};U_{1,12}|U_{2,12})\nonumber \\
R_{2,12}^{''} & \geq & I(X_{2};U_{2,12}|U_{1,12})\nonumber \\
R_{1,12}^{''}+R_{2,12}^{''} & \geq & I(X_{1},X_{2};U_{1,12},U_{2,12})\label{eq:2_E_2_E_2}\end{eqnarray}
 The decoder at $S_{1}$ next looks at the codebooks of $U_{1,1}$
and $U_{2,1}$ which were generated conditioned on $u_{1,12}$ and
$u_{2,12}$, respectively, to find a unique pair of codewords from $\mathcal{C}_{1,1}(m_{1,1})$
and $\mathcal{C}_{2,1}(m_{2,1})$ which are jointly typical with $(u_{1,12},u_{2,12})$.
We again have one pair, $(u_{1,1},u_{2,1})$, which is jointly typical
with $(u_{1,12},u_{2,12})$. It can be shown using arguments similar
to \cite{Han_Kobayashi} that the probability of finding no other
jointly typical pair approaches $1$ if :\begin{eqnarray}
(R_{1,1}^{'}-R_{1,1}^{''}) & \leq & I(U_{1,1};U_{2,1},U_{2,12}|U_{1,12})\nonumber \\
(R_{2,1}^{'}-R_{2,1}^{''}) & \leq & I(U_{2,1};U_{1,1},U_{1,12}|U_{2,12})\nonumber \\
\bigl\{(R_{1,1}^{'}-R_{1,1}^{''}) & \leq & H(U_{1,1}|U_{1,12})+H(U_{2,1}|U_{2,12})\nonumber \\
+(R_{2,1}^{'}-R_{2,1}^{''})\bigr\} &  & -H(U_{1,1},U_{2,1}|U_{1,12},U_{2,12})\label{eq:2_E_2_E_31}\end{eqnarray}
 On substituting the constraints for $R_{1,1}^{'}$ and $R_{1,2}^{'}$
from (\ref{eq:2_E_2_E_0}), and using the Markov chain condition in
(\ref{eq:2_E_2_E_Markov}) we get:

\begin{eqnarray}
R_{1,1}^{''} & \geq & I(X_{1};U_{1,1}|U_{1,12},U_{2,12},U_{2,1})\nonumber \\
R_{2,1}^{''} & \geq & I(X_{2};U_{2,1}|U_{1,12},U_{2,12},U_{1,1})\nonumber \\
R_{1,1}^{''}+R_{2,1}^{''} & \geq & I(X_{1},X_{2};U_{1,1},U_{2,1}|U_{1,12},U_{2,12})\label{eq:2_E_2_E_3}\end{eqnarray}
 After successfully decoding the codewords $(u_{1,12},u_{1,1},u_{2,12},u_{2,1})$,
the decoder at $S_{1}$ looks for a unique sequence from $\mathcal{B}_{1,1}(l_{1,1})$
which is jointly typical with $(u_{1,12},u_{1,1},u_{2,12},u_{2,1})$.
We again have $x_{1}^{n}$ satisfying this property. It can be shown
that the probability of finding no other sequence which is jointly
typical with $(u_{1,12},u_{1,1},u_{2,12},u_{2,1})$ approaches $1$
if:\begin{equation}
\tilde{R}_{1,1}\geq H(X_{1}|U_{1,12},U_{2,12},U_{1,1},U_{2,1})\label{eq:2_E_2_E_4}\end{equation}
 Similar conditions at sink $S_{2}$ lead to the following constraints:

\begin{eqnarray}
R_{2,2}^{''} & \geq & I(X_{2};U_{2,2}|U_{1,12},U_{2,12},U_{1,2})\nonumber \\
R_{1,2}^{''} & \geq & I(X_{2};U_{1,2}|U_{1,12},U_{2,12},U_{2,2})\nonumber \\
R_{2,2}^{''}+R_{1,2}^{''} & \geq & I(X_{1},X_{2};U_{2,2},U_{1,2}|U_{1,12},U_{2,12})\nonumber \\
\tilde{R}_{2,2} & \geq & H(X_{2}|U_{1,12},U_{2,12},U_{1,2},U_{2,2})\label{eq:2_E_2_E_5}\end{eqnarray}

The first packet from $E_{1}$, destined to only $S_{1}$, carries
indices $(m_{1,1},l_{1,1})$ at rate $R_{1,1}=R_{1,1}^{''}+\tilde{R}_{1,1}$.
The second and third packets carry $m_{1,2}$ and $m_{1,12}$ at rates
$R_{1,2}=R_{1,2}^{''}$ and $R_{1,12}=R_{1,12}^{''}$, respectively 
and are routed to the corresponding sinks. Similarly, 3 packets are
transmitted from $E_{2}$ carrying indices $\{m_{2,1},m_{2,12},(m_{2,2},l_{2,2})\}$
at rates $(R_{2,1},R_{2,12},R_{2,2})=(R_{2,1}^{''},R_{2,12}^{''},R_{2,2}^{''}+\tilde{R}_{2,2})$
to sinks $\{S_{1},S_{2},(S_{1},S_{2})\}$, respectively. Constraints
for $(R_{1,1},R_{1,2},R_{1,12},R_{2,1},R_{2,2},R_{2,12})$ can now
be obtained using (\ref{eq:2_E_2_E_2}),(\ref{eq:2_E_2_E_3}), (\ref{eq:2_E_2_E_4})
and (\ref{eq:2_E_2_E_5}). The convex closure of achievable rates
over all such random variables $(U_{1,12},U_{1,1},U_{1,2},U_{2,12},U_{2,1},U_{2,2})$
gives the achievable rate region for the 2 source - 2 sink DIR problem.
It is easy to verify that this region subsumes the region that would be produced by employing the approach of Han and
Kobayashi \cite{Han_Kobayashi}, which must assume three independent encoders at each source. Observe that in the above illustration, we assumed that the decoding is performed in a sequential manner, i.e., the codewords of $U_{1,12}$ are decoded first followed by the codewords of $(U_{1,1})$ and $(U_{1,2})$, respectively. This was done only for the ease of understanding. In Theorem \ref{thm:main}, we derive the conditions on rates for the decoders to find typical sequences from all the codebooks jointly (at once). Note that conditions on the rates for joint decoding is generally weaker (the region is larger) than that for sequential decoding.

\section{Dispersive Information Routing - General Setup\label{sec:General_setup}}

Let a network be represented by an undirected connected graph $G=(V,\mathcal{E})$.
Each edge $e\in \mathcal{E}$ is associated with an edge weight, $W_{e}$. The
communication cost is assumed to be a simple product of the edge rate
and edge weight\footnote{The approach is applicable to more general cost functions.%
}, i.e., $C_{e}=R_{e}W_{e}$. The nodes $V$ consist of $N$ source nodes (denoted by $E_{1},E_{2}\ldots E_{N}$),
$M$ sinks (denoted by $S_{1},S_{2}\ldots S_{M}$), and $|V|\lyxmathsym{\textminus}N\lyxmathsym{\textminus}M$
intermediate nodes. We define the sets $\Sigma=\{1\ldots N\}$ and $\Pi=\{1\ldots M\}$.
Source node $E_{i}$ observes $n$ iid random variables $X_{i}^n$, each taking values 
over a finite alphabet $\mathcal{X}_{i}$. Sink $S_{j}$ reconstructs 
(requests) a subset of the sources specified by $\Sigma_{j}\subseteq\Sigma$.
Conversely, source node $E_{i}$ is reconstructed at a subset of sinks
specified by $\Pi_{i}\subseteq\Pi$. The objective is to find the minimum
communication cost achievable by dispersive information routing for
lossless reconstruction of the requested sources at each sink when
every source node can (possibly) communicate with every sink.

\subsection{Obtaining the effective costs\label{sub:Steiner}}

Under DIR each source transmits at most $2^{M}-1$ packets into the network,
each meant for a different subset of sinks. Note that, while $\Pi_i$ is the subset of sinks reconstructing $X_{i}^n$, $E_i$ may be transmitting packets to many other subsets of sinks. Let the packet from source
$E_{i}$ to the subset of sinks $\mathcal{K}\subseteq\Pi$ be denoted
by $\mathcal{P}_{i,\mathcal{K}}$ and let it carry information at
rate $R_{i,\mathcal{K}}$.

The optimum route for packet $\mathcal{P}_{i,\mathcal{K}}$ from the
source to these sinks is determined by a spanning tree optimization
(minimum Steiner tree) \cite{Cormen}. More specifically, for each
packet $\mathcal{P}_{i,\mathcal{K}}$, the optimum route is obtained
by minimizing the cost over all trees rooted at node $i$ which span
all sinks $j\in\mathcal{K}$. The minimum cost of transmitting packet
$\mathcal{P}_{i,\mathcal{K}}$ with $R_{i,\mathcal{K}}$ bits from
source $i$ to the subset of sinks $\mathcal{K}$, denoted by $d_{i}(\mathcal{K})$
is :\begin{equation}
d_{i}(\mathcal{K})=R_{i,\mathcal{K}}\min_{Q\in \mathcal{E}_{i,\mathcal{K}}}\sum_{e\in Q}w_{e}\label{eq:modified_costs}\end{equation}
 where $\mathcal{E}_{i,\mathcal{K}}$ denotes the set of all paths from source
$i$ to the subset of sinks $\mathcal{K}$. Having obtained the effective
cost for each packet in the network, our next objective is to find
an achievable rate region for the tuple $(R_{i,\mathcal{K}}\,\,\forall i\in\Sigma,\mathcal{K}\subseteq\Pi)$.
The minimum communication cost then follows directly from a simple
linear programming formulation. Note that the minimum Steiner tree
problem is NP - hard and requires approximate algorithms
to solve in practice. Also note that in theory, each encoder transmits $2^M-1$ packets into the network. While in practice we might be able to realize improvements over broadcast routing using significantly fewer packets (see e.g., \cite{EUSIPCO}).

\subsection{\label{sub:general_ach_region}An achievable rate region}

In what follows, we use the shorthand $\{U_{i}\}{}_{\mathcal{S}}$
for $\{U_{i,\mathcal{K}}:\mathcal{K}\in\mathcal{S}\}$ and $\{U_{\Gamma}\}{}_{\mathcal{S}}$
for $\{U_{i,\mathcal{K}}:i\in\Gamma,\,\mathcal{K}\in\mathcal{S}\}$.
Note the difference between $\{U_{i}\}{}_{\mathcal{S}}$ and $U_{i,\mathcal{S}}$.
$\{U_{i}\}{}_{\mathcal{S}}$ is a set of variables, whereas $U_{i,\mathcal{S}}$
is a single variable. For example, $\{U_{1}\}_{(1,2,12)}$ denotes
the set of variables $(U_{1,1},U_{1,2},U_{1,12})$ and $\{U_{(1,2)}\}_{(1,2,12)}$
represents the set $(U_{1,1},U_{1,2},U_{1,12},U_{2,1},U_{2,2},U_{2,12})$.

We first give a formal definition of a block code and an associated
rate region for DIR. We denote the set $\{1,2\ldots L\}$ by $I_{L}$
for any positive integer $L$. We assume that the source node $E_{i}$
observes the random sequence $X_{i}^n$.
An $(n,P_{e},L_{i,\mathcal{K}};\forall i\in\Sigma,\mathcal{K}\in2^{\Pi}-\phi)$
DIR-code is defined by the following mappings:
\begin{itemize}
\item \textit{Encoder}s: \begin{equation}
f_{i}^{E}:\mathcal{X}_{i}^{n}\rightarrow\prod_{\mathcal{K}\in2^{\Pi}-\phi}I_{L_{i,\mathcal{K}}}\label{eq:encoder}\end{equation}

\item \textit{Decoder}s:\begin{equation}
f_{j}^{D}:\prod_{i\in\Sigma}\prod_{\mathcal{K}\in2^{\Pi}:j\in\mathcal{K}}I_{L_{i,\mathcal{K}}}\rightarrow\{\mathcal{X}^{n}\}_{\Sigma_{j}}\label{eq:decoders}\end{equation}

\end{itemize}
Denoting $f_{i}^{E}(X_{i}^{n})=\{T_{i}\}_{2^{\Pi}-\phi}$ where $1\leq T_{i,\mathcal{K}}\leq L_{i,\mathcal{K}}$,
the decoder estimates are given by:\begin{equation}
\{\hat{X}^{n}\}_{\Sigma_{j}}=f_{j}^{D}(\{T_{\Sigma}\}_{(\mathcal{K}\in2^{\Pi}:j\in\mathcal{K})})\label{eq:estimates}\end{equation}
Note the correspondence between the encoder-decoder mappings and dispersive
information routing. Observe that packet $\mathcal{P}_{i,\mathcal{K}}$ carries $T_{i,\mathcal{K}}$ at rate $L_{i,\mathcal{K}}$  from source $i$ to the subset of sinks $\mathcal{K}$. The probability of error is defined as:\begin{equation}
P_{e}=\frac{1}{M}\left[\sum_{j\in\Pi}P(\{X^{n}\}_{\Sigma_{j}}\neq\{\hat{X}^{n}\}_{\Sigma_{j}})\right]\label{eq:Pe}\end{equation}
A rate tuple $\{R_{i,\mathcal{K}};\forall i,\mathcal{K}\}$ is said
to be achievable if for any $\eta>0$ and $0<\epsilon<1$, there exists
a $(n,P_{e},L_{i,\mathcal{K}};\forall i\in\Sigma,\mathcal{K}\in2^{\Pi}-\phi)$
code for $n$ sufficiently large such that,\begin{equation}
R_{i,\mathcal{K}}\leq\frac{1}{n}\log L_{i,\mathcal{K}}+\eta\end{equation}
with the probability of error less than $\epsilon$, i.e., \begin{equation}
P_{e}<\epsilon\end{equation}

We extend the coding scheme described in section \ref{sec:2_s_2_s}
to derive an achievable rate region for the tuple $(R_{i,\mathcal{K}}\,\,\forall i\in\Sigma,\mathcal{K}\in2^{\Pi}-\phi)$
using principles from multiple descriptions encoding \cite{VKG,EGC,ZB}
and Han and Kobayashi decoding \cite{Han_Kobayashi}, albeit with
more complex notation. Without loss of generality, we assume that
every source can send packets to every sink.

Before stating the achievable rate region in Theorem \ref{thm:main},
we define the following subsets of $2^{\Pi}$:\begin{eqnarray}
\mathcal{I}_{W} & = & \{\mathcal{K}:\mathcal{K}\in2^{\Pi},\,\,|\mathcal{K}|=W\}\nonumber \\
\mathcal{I}_{W+} & = & \{\mathcal{K}:\mathcal{K}\in2^{\Pi},\,\,|\mathcal{K}|>W\}\end{eqnarray}
Let $\mathcal{B}$ be any subset of $\Pi$ with $|\mathcal{B}|\leq W$.
We define the following subsets of $\mathcal{I}_{W}$ and $\mathcal{I}_{W+}$:\begin{eqnarray}
\mathcal{I}_{W}(\mathcal{B}) & = & \{\mathcal{K}:\mathcal{K}\in\mathcal{I}_{W},\,\,\mathcal{B}\subseteq\mathcal{K}\}\nonumber \\
\mathcal{I}_{W+}(\mathcal{B}) & = & \{\mathcal{K}:\mathcal{K}\in\mathcal{I}_{W+},\,\,\mathcal{B}\subseteq\mathcal{K}\}\end{eqnarray}
We also define: \begin{equation}
\mathcal{J}(\mathcal{S})=\{\mathcal{K}:\,\,\mathcal{K}\in2^{\Pi},\,\,|\mathcal{K}\bigcap\mathcal{S}|>0\}\end{equation}
Note that $\mathcal{J}(\Pi)=2^{\Pi}-\phi$. Let $\mathcal{Q}$ be
any subset of $2^{\Pi}-\phi$. We say that $\mathcal{Q}\in\mathcal{Q}^{*}$
if it satisfies the following property $\forall \mathcal{K} \in \mathcal{Q}$:\begin{equation}
\mbox{if }\mathcal{K}\in\mathcal{Q}\,\,\Rightarrow\,\,\,\mathcal{I}_{|\mathcal{K}|+}(\mathcal{K})\subset\mathcal{Q}\label{eq:cond_Q_main}\end{equation}Let $\{U_{\Sigma}\}{}_{\mathcal{J}(\Pi)}$ be any set of $N(2^{M}-1)$
random variables defined on arbitrary finite alphabets, jointly distributed
with $\{X\}_{\Sigma}$ satisfying the following: $\forall j\in\Pi$,\begin{equation}
P(\{X\}_{\Sigma},\{U_{\Sigma}\}_{\mathcal{J}(j)})=P(\{X\}_{\Sigma})\prod_{i\in\Sigma}P(\{U_{i}\}_{\mathcal{J}(j)}|X_{i})\label{eq:thm_Markov}\end{equation}
The above Markov condition ensures that all the codewords which reach
a sink are jointly typical with $\{X\}_{\Sigma_{j}}$.

We define $\alpha(i,\mathcal{Q})$ as:

\begin{eqnarray}
\alpha(i,\mathcal{Q}) & = & -H\left(\{U_{i}\}_{\mathcal{Q}}|X_{i}\right)\nonumber \\
 &  & +\sum_{\mathcal{K}\in\mathcal{Q}}H\left(U_{i,\mathcal{K}}|\{U_{i}\}_{\mathcal{I}_{|\mathcal{K}|+}(\mathcal{K})}\right)\label{eq:alpha_defn-1}\end{eqnarray}
 $\forall i\in\Sigma,\mathcal{Q}\subseteq\mathcal{J}(\Pi)$. We further
define $\beta(k,\mathcal{Q}_{1},\mathcal{Q}_{2},\ldots\mathcal{Q}_{N})$
$\forall k\in\Pi,\,\,\mathcal{Q}_{1},\mathcal{Q}_{2},\ldots\mathcal{Q}_{N}\subseteq\mathcal{J}(k)$
as:\begin{eqnarray}
\beta(k,\mathcal{Q}_{1},\mathcal{Q}_{2},\ldots\mathcal{Q}_{N})=H\left(\{U_{i}\}_{\mathcal{Q}_{i}^{c}}\forall i|\{U_{i}\}_{\mathcal{Q}_{i}}\forall i\right)\nonumber \\
-\sum_{i\in\Sigma}\sum_{\mathcal{K}\in\mathcal{Q}_{i}^{c}}H(U_{i,\mathcal{K}}|\{U_{i}\}_{\mathcal{I}_{|\mathcal{K}|+}(\mathcal{K})})\label{eq:bound_rate_diff-1}\end{eqnarray}
where $\mathcal{Q}_{i}^{c}=\mathcal{J}(k)-\mathcal{Q}_{i}$ and 
define $\gamma_{k}(\Gamma)$ as :\begin{eqnarray}
\gamma_{k}(\Gamma)=H\left(\{X\}_{\Gamma}|\{X\}_{\Gamma^{c}},\{U_{\Sigma}\}_{\mathcal{J}(k)}\right)\nonumber \\
\forall k\in\Pi,\Gamma\subseteq\Sigma_{k}\label{eq:gamma_defn}\end{eqnarray}
 where $\Gamma^{c}=\Sigma_{k}-\Gamma$. We state our main result in
the following Theorem.
\begin{thm}
Achievable Rate Region for DIR :\label{thm:main}Let $\{U_{\Sigma}\}{}_{2^{\Pi}-\phi}$
be any set of random variables satisfying (\ref{eq:thm_Markov}).
Let $(R_{i,\mathcal{K}}^{'}\,\,\forall i\in\Sigma,\mathcal{K}\in2^{\Pi}-\phi)$
be any set of auxiliary rate tuples such that:\begin{equation}
\sum_{\mathcal{K}\in\mathcal{Q}}R_{i,\mathcal{K}}^{'}\geq\alpha(i,\mathcal{Q})\label{eq:thm_0}\end{equation}
$\forall\mathcal{Q}\in\mathcal{Q}^{*}$. Further, let $(R_{i,\mathcal{K}}^{''}\,\,\forall i\in\Sigma,\mathcal{K}\in2^{\Pi}-\phi)$
be any set of rate tuples \textup{such that}: \begin{eqnarray}
\sum_{i\in\Sigma}\sum_{\mathcal{K}\in\mathcal{Q}_{i}^{c}}R_{i,\mathcal{K}}^{''}\geq\sum_{i\in\Sigma}\sum_{\mathcal{K}\in\mathcal{Q}_{i}^{c}}R_{i,\mathcal{K}}^{'}+\beta(k,\mathcal{Q}_{1},\mathcal{Q}_{2},\ldots\mathcal{Q}_{N})\label{eq:thm_1}\end{eqnarray}
for each $k\in\Pi$, $\forall\mathcal{Q}_{1},\mathcal{Q}_{2},\ldots\mathcal{Q}_{N}\subseteq\mathcal{J}(k)$
\textup{satisfying }(\ref{eq:cond_Q_main}) such that $\exists i\in\{1,\dots,N\}:\mathcal{Q}_{i}\neq\mathcal{J}(k)$. Let $(\tilde{R}_{i,\mathcal{K}}\,\,\forall i\in\Sigma,\mathcal{K}\in2^{\Pi_{k}}-\phi)$
satisfy: \textup{\begin{equation}
\sum_{i\in\Gamma}\sum_{\mathcal{K}:k\in\mathcal{K}}\tilde{R}_{i,\mathcal{K}}\geq\gamma_{k}(\Gamma)\label{eq:thm_2}\end{equation}
}$\forall k\in\Pi,\Gamma\in2^{\Sigma_{k}}-\phi$. Then, the achievable
rate region for the tuple \textup{$(R_{i,\mathcal{S}}\,\,\forall i\in\Sigma,\mathcal{S}\in2^{\Pi}-\phi)$
contains all rates such that,} \begin{equation}
R_{i,\mathcal{K}}\geq\begin{cases}
R_{i,\mathcal{K}}^{''}+\tilde{R}_{i,\mathcal{K}} & \mbox{if }\mathcal{K}\subseteq2^{\Pi_{i}}-\phi\\
R_{i,\mathcal{K}}^{''} & \mbox{if }\mathcal{K}\nsubseteq2^{\Pi_{i}}-\phi\end{cases}\label{eq:thm_3}\end{equation}
The convex closure of the achievable tuples over all such $N(2^{M}-1)$
random variables satisfying (\ref{eq:thm_Markov}) is the achievable
rate region for DIR and is denoted by $\mathcal{R}_{DIR}$.\end{thm}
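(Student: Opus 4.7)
The plan is to generalize the three-stage coding construction illustrated in Section \ref{sec:2_s_2_s}, combining multi-layer multiple-descriptions encoding, Han-Kobayashi style joint typicality decoding, and Slepian-Wolf power binning for lossless reconstruction at each sink. Specifically, the $(R'_{i,\mathcal{K}})$ govern codebook sizes at encoders, the $(R''_{i,\mathcal{K}})$ govern the binning rates routed on packet $\mathcal{P}_{i,\mathcal{K}}$, and the $(\tilde{R}_{i,\mathcal{K}})$ carry the power-binning indices that identify $x_i^n$ exactly at the sinks that request it; the total rate on $\mathcal{P}_{i,\mathcal{K}}$ will then be $R''_{i,\mathcal{K}}+\tilde{R}_{i,\mathcal{K}}$ or $R''_{i,\mathcal{K}}$, producing (\ref{eq:thm_3}).

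First I would construct, at each source $E_i$, the auxiliary codebooks $\{U_i\}_{2^\Pi-\phi}$ in a nested fashion ordered by decreasing cardinality of the subset index. For each $\mathcal{K}$ I generate $2^{nR'_{i,\mathcal{K}}}$ codewords of $U_{i,\mathcal{K}}$ conditionally on the already-generated codewords indexed by $\mathcal{I}_{|\mathcal{K}|+}(\mathcal{K})$, under the conditional distribution induced by the per-sink factorization (\ref{eq:thm_Markov}). Upon observing $x_i^n$, the encoder selects a codeword tuple jointly typical across all $2^M-1$ codebooks with $x_i^n$. A multi-layer covering-lemma argument, applied to every subset $\mathcal{Q}\in\mathcal{Q}^*$, shows that such a tuple exists with probability tending to one provided (\ref{eq:thm_0}) holds; the upward-closure property (\ref{eq:cond_Q_main}) of $\mathcal{Q}^*$ arises naturally because whenever $\mathcal{K}\in\mathcal{Q}$ every strictly larger $\mathcal{K}'\in\mathcal{I}_{|\mathcal{K}|+}(\mathcal{K})$ must already be frozen as conditioning, and $\alpha(i,\mathcal{Q})$ is precisely the exponent of the probability that a candidate codeword tuple in the ``free'' coordinates $\mathcal{Q}$ is jointly typical with $x_i^n$.

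Next, each codebook is independently and uniformly partitioned into $2^{nR''_{i,\mathcal{K}}}$ bins and the bin index $m_{i,\mathcal{K}}$ is routed via packet $\mathcal{P}_{i,\mathcal{K}}$. In parallel, for each $i\in\Sigma$ and $\mathcal{K}\in 2^{\Pi_i}-\phi$, I apply power binning of $x_i^n$ and piggy-back an index $l_{i,\mathcal{K}}$ of rate $\tilde{R}_{i,\mathcal{K}}$ on the same packet, as in Section \ref{sub:Motivating-example_1}. The Markov conditions (\ref{eq:thm_Markov}) and the conditional Markov lemma \cite{Wagner} guarantee, at each sink $k$, joint typicality of $\{X\}_{\Sigma_k}$ with the selected codewords $\{U_\Sigma\}_{\mathcal{J}(k)}$. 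At sink $k$, the decoder searches for a unique jointly typical tuple in the received codeword bins together with a unique source tuple in the received power bins. Partitioning the error event by specifying, for each source $i$, which codewords are correctly decoded (the set $\mathcal{Q}_i$) versus which are in error ($\mathcal{Q}_i^c=\mathcal{J}(k)-\mathcal{Q}_i$), and counting candidate erroneous configurations, yields a union bound whose exponent matches $\beta(k,\mathcal{Q}_1,\dots,\mathcal{Q}_N)$ after subtracting the $R'$-rates that re-cover codewords drawn from the same bin; the requirement $\mathcal{Q}_i\in\mathcal{Q}^*$ again reflects the nested codebook structure, and (\ref{eq:thm_1}) follows. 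The residual ambiguity over $\{X\}_{\Sigma_k}$ given the decoded codewords is resolved by Slepian-Wolf decoding of the power bins applied to each subset $\Gamma\subseteq\Sigma_k$, producing the $\gamma_k(\Gamma)$ bounds in (\ref{eq:thm_2}).

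The main obstacle I expect is the combinatorial bookkeeping underlying both the covering analysis (showing that the union bound factors cleanly over $\mathcal{Q}^*$ to yield the compact expression $\alpha(i,\mathcal{Q})$ in (\ref{eq:alpha_defn-1})) and the packing analysis (establishing that the joint-typicality exponent for the erroneous configurations across all $N$ sources collapses into $\beta(k,\mathcal{Q}_1,\dots,\mathcal{Q}_N)$ in (\ref{eq:bound_rate_diff-1})); keeping track of which codewords are conditioned upon and which are packed requires a careful poset-indexed inclusion-exclusion over the lattice of subsets of $\Pi$. A secondary subtlety is that (\ref{eq:thm_Markov}) is imposed per sink rather than jointly across all sinks, so the conditional Markov lemma must be invoked separately at each $k$, and one must verify that a single encoder choice of $\{U_i\}_{2^\Pi-\phi}$ simultaneously satisfies all sink-local Markov constraints used during decoding. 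Taking the convex closure over all admissible auxiliary distributions then yields $\mathcal{R}_{DIR}$.
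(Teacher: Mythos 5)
Your outline follows essentially the same route as the paper: hierarchical conditional codebook generation ordered by decreasing $|\mathcal{K}|$, a covering (second-moment) analysis over $\mathcal{Q}\in\mathcal{Q}^{*}$ giving (\ref{eq:thm_0}), uniform binning with a Han--Kobayashi style packing analysis partitioned by the correctly-decoded index sets $\mathcal{Q}_{i}$ giving (\ref{eq:thm_1}), power binning with Slepian--Wolf decoding giving (\ref{eq:thm_2}), and the packet-rate accounting giving (\ref{eq:thm_3}). However, there is one genuine gap, and it sits exactly where you write that ``the Markov conditions (\ref{eq:thm_Markov}) and the conditional Markov lemma \cite{Wagner} guarantee, at each sink $k$, joint typicality of $\{X\}_{\Sigma_k}$ with the selected codewords.'' The lemma of Wagner et al.\ does not apply to this construction, and the paper explicitly flags this: it states and proves a strictly stronger result, the ``conditional Markov lemma for mutual covering'' (Appendix B, Lemmas \ref{lem:Cond_Mark_Lemma_1}--\ref{lem:Cond_Mark_Lemma_3}), precisely because the existing generalized and conditional Markov lemmas are insufficient here.

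The reason is that each encoder $E_i$ selects, by \emph{mutual covering}, several sibling codewords $U_{i,\mathcal{K}}$ generated conditionally on the same parent codewords and required to be jointly typical with each other as well as with $x_i^n$ (this is the source of the $I(U_{i,\mathcal{K}};U_{i,\mathcal{K}'}|\cdot)$-type terms hidden in $\alpha(i,\mathcal{Q})$, and of the $I(V_1;V_2|U)$ term in (\ref{eq:Cond_Mark_Lemma_Rate_Cond})). The selected tuple is therefore not distributed like a single conditionally drawn codeword chain; the selection event couples the siblings, so neither the Han--Kobayashi generalized Markov lemma (unconditioned codebooks) nor Wagner's conditional Markov lemma (a single conditional layer per encoder, no mutual covering among siblings) shows that the tuples chosen independently at the different sources are jointly typical with $(\{x\}_{\Sigma}^n)$ across the long Markov chain. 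Establishing this requires the machinery of Appendix B: bounding the sizes of the conditional ``good'' sets $B_0$, $B_{12}$ \emph{and their projections} $\hat{B}_1,\hat{B}_2$ via (\ref{eq:Cond_Mark_Lemma_Given_2-1}), and running a second-moment argument that pays the extra mutual-covering rate penalty. Without this strengthened lemma your decoding analysis at each sink---which presumes $\{u^{*}_{\Sigma}\}_{\mathcal{J}(k)}$ jointly typical with $\{x\}_{\Sigma_k}$ before the packing and Slepian--Wolf steps---is unsupported; your ``secondary subtlety'' about invoking the lemma per sink gestures at the issue but does not supply the missing ingredient.
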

\begin{rem}
The converse to this achievability region does not hold in general.
A simple counter example follows from the famous binary modulo two
sum problem proposed by K$\mathrm{\ddot{o}}$rner and Marton for the 2 helper setup in
\cite{mod_two_prob}. However, in section \ref{sec:Outerbounds-to-certain}
we prove the converse for certain special cases.
\end{rem}

\begin{rem}
The coding scheme in Theorem \ref{thm:main} can be easily specialized
to `power binning' by setting $\{U_{\Sigma}\}{}_{2^{\Pi}-\phi}$ to
constants. This effectively becomes the `no-helpers' scenario as setting
$\{U_{\Sigma}\}{}_{2^{\Pi}-\phi}$ to constants implies that $R_{i,\mathcal{S}}=0$
$\forall\mathcal{S}\notin2^{\Pi_{i}}$. \end{rem}
\begin{proof}
We follow the notation and the notion of strong typicality defined
in \cite{Han_Kobayashi}. We refer to \cite{Han_Kobayashi} (section
3) for formal definitions and basic Lemmas associated with typicality.

\begin{figure}
\centering\includegraphics[scale=0.4]{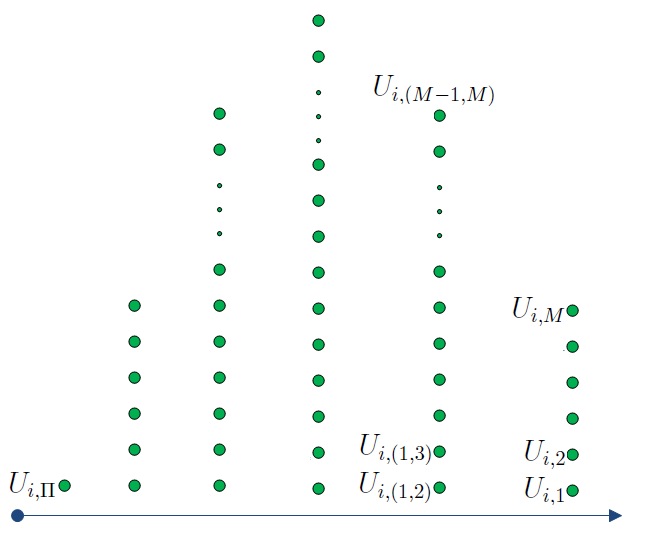}\caption{Illustrates the order of codebook generation at source $i$\label{fig:order}}

\end{figure}

\textit{Encoding} : Suppose we are given $\{U_{\Sigma}\}{}_{2^{\Pi}-\phi}$
satisfying (\ref{eq:thm_Markov}). As in section \ref{sec:2_s_2_s},
the encoding at each node is divided into 3 stages:

1) \textit{Stage 1} : We focus on the encoding at source node $E_i$. The
codebook generation is done following the order of $U_{i,\mathcal{K}},\,\,|\mathcal{K}|=M,M-1,M-2\ldots,1$
as shown in Fig. \ref{fig:order}. First, $2^{nR_{i,\Pi}^{'}}$ independent
codewords of $U_{i,\Pi}$, $u_{i,\Pi}^{n}(j)\,\, j\in\{1\ldots2^{nR_{i,\Pi}^{'}}\}$,
are generated according to the density $\prod_{t=1}^{n}P_{U_{i,\Pi}}(u_{i,\Pi}^{(t)})$.
Conditioned on each codeword $u_{i,\Pi}^{n}(j)$, $2^{nR_{i,\mathcal{K}}^{'}}$
codewords of $U_{i,\mathcal{K}}:|\mathcal{K}|=M-1$ are generated
independent of each other according to the conditional density $\prod_{t=1}^{n}P_{U_{i,\mathcal{K}}|U_{i,\Pi}}(u_{i,\mathcal{K}}^{(t)}|u_{i,\Pi}^{(t)})$.
Similarly, $\forall\mathcal{K}:|\mathcal{K}|<M$, $2^{nR_{i,\mathcal{K}}^{'}}$
codewords of $U_{i,\mathcal{K}}$ are independently generated conditioned
on each codeword tuple of $\{U_{i}\}_{\mathcal{I}_{|\mathcal{K}|+}(\mathcal{K})}$
according to $\prod_{t=1}^{n}P_{U_{i,\mathcal{K}}|\{U_{i}\}_{\mathcal{I}_{|\mathcal{K}|+}(\mathcal{K})}}(u_{i,\mathcal{K}}^{(t)}|\{u_{i}\}_{\mathcal{I}_{|\mathcal{K}|+}(\mathcal{K})}^{(t)})$.
Note that to generate the codewords of $U_{i,\mathcal{K}}$, we first need all the codebooks of $\{U_{i}\}_{\mathcal{I}_{|\mathcal{K}|+}(\mathcal{K})}$.
On observing a sequence, $x_{i}^{n}$, the encoder at $E_{i}$ attempts
to find a set of codewords, one for each variable, such that they
are all jointly typical. If it fails to find such a set, it declares
an error. Codebooks are generated similarly at all the source nodes. Note
that all the random variables $U_{i,i}\forall i\in\Sigma$ can be
set to constants without changing the rate region of Theorem \ref{thm:main}.
However, we continue to use them to avoid more complex notation.

2) \textit{Stage 2} : In stage 2, the codewords in each codebook are
divided into uniform bins. Specifically, the $2^{nR_{i,\mathcal{K}}^{'}}$
codewords in any codebook of $U_{i,\mathcal{K}}$ are subdivided into
$2^{nR_{i,\mathcal{K}}^{''}}$ bins, with each bin containing $2^{n(R_{i,\mathcal{K}}^{'}-R_{i,\mathcal{K}}^{''})}$
codewords. All the codewords which have the same bin index $m$ are
said to fall in the bin $\mathcal{C}_{i,\mathcal{K}}(m)$ $\forall m\in(1\ldots2^{nR_{i,\mathcal{K}}^{''}})$.
If in stage 1, the encoder succeeds in finding a jointly typical set
of codewords, the bin index of the codeword of $U_{i,\mathcal{K}}$
is sent as part of packet $\mathcal{P}_{i,\mathcal{K}}$.

3) \textit{Stage 3} : \textit{Power Binning} : In this stage, each
typical sequence of $X_{i}$ is assigned $2^{|\Pi_{i}|}-1$ indices,
randomly generated using uniform pmfs over $(1,\ldots,2^{\tilde{R}_{i,\mathcal{K}}})\,\,\forall\mathcal{K}\in2^{\Pi_{i}}-\phi$, 
respectively. All the sequences of $i$ which have the same bin index
$l$ are said to fall in the bin $\mathcal{B}_{i,\mathcal{K}}(l)$
$\forall l\in(1\ldots2^{n\tilde{R}_{i,\mathcal{K}}})$. On observing
a sequence $x_{i}^{n}$, if it is typical, the encoder sends the corresponding
bin indices in the packets $\mathcal{P}_{i,\mathcal{K}}:\mathcal{K}\in2^{\Pi_{i}}-\phi$,
in addition to the bin indices in stage 2. If it is not typical, the
encoder declares an error. Note that all packets from source node $E_i$
to a subset of sinks $\mathcal{K}$ such that $\mathcal{K}\subseteq2^{\Pi_{i}}-\phi$,
carry two bin indices, one each from stages 2 and 3, respectively.

In Appendix \ref{app:Proof-of-Theorem}, we show that, if the rates
$R_{i,\mathcal{K}}^{'}$ satisfy (\ref{eq:thm_0}), then the probability
of encoding error asymptotically approaches zero, i.e., we can, with
probability approaching 1, find a codeword tuple, one from each codebook
such that all the codewords are jointly typical if the rates satisfy
(\ref{eq:thm_0}). Let the codewords, which are jointly typical with
$x_{i}^{n}$, be denoted as $u_{i,\mathcal{K}}^{*}$ $\forall\mathcal{K}\in\mathcal{J}(\Pi)=2^{\Pi}-\phi$.
To ensure joint typicality of $(\{x\}_{\Sigma}^{n},\{u_{\Sigma}^{*}\}_{\mathcal{J}(k)})$,
we require a stronger version of the {}``conditional Markov lemma''
in \cite{Wagner}. We state and prove this stronger version, called the ``conditional Markov lemma for mutual covering'' in Appendix B. From this lemma, it follows that $\left(\{x\}_{\Sigma}^{n},\{u_{\Sigma}^{*}\}_{\mathcal{J}(k)}\right)\in\mathcal{T}_{\epsilon}^{n}\left(\{X\}_{\Sigma}^{n},\{U_{\Sigma}^{*}\}_{\mathcal{J}(k)}\right)$
with very high probability given that the encoding at all the source nodes is
error free. Let the bin indices of $u_{i,\mathcal{K}}^{*}$ (assigned
in stage 2) be denoted by $m_{i,\mathcal{K}}\,\,\forall\mathcal{K}\in2^{\Pi}-\phi$
and let the bin indices of $x_{i}^{n}$ (assigned in stage 3) be denoted
by $l_{i,\mathcal{K}}\forall\mathcal{K}\in2^{\Pi_{i}}-\phi$.

\textit{Decoding} : We focus on a particular sink $S_{k}$. Sink $S_{k}$
receives all the indices $\{m_{\Sigma}\}_{\mathcal{J}(k)}$ of stage
2 of encoding from all source nodes. It also receives $\{l_{\Sigma_{k}}\}_{\mathcal{J}(k)}$
of stage 3 of encoding from source nodes $\Sigma_{k}$. In the first stage
of decoding, it begins decoding $u_{i,\mathcal{J}(k)}^{*}\,\,\forall i\in\Sigma$
by looking for a unique jointly typical codeword tuple from $\{\mathcal{C}_{i,\mathcal{J}(k)}(m_{i,\mathcal{J}(k)});\forall i\in\Sigma\}$.
Clearly $\{u_{\Sigma}^{*}\}_{\mathcal{J}(k)}$ satisfies this property.
If the decoder finds another such jointly typical codeword tuple in
the received bins, it declares an error. In Appendix \ref{app:Proof-of-Theorem}, we show that if conditions (\ref{eq:thm_1})  are satisfied by  $R_{i,\mathcal{K}}^{''}$, then the probability that the decoder finds another such jointly typical codeword tuple approaches zero.

In the last stage of decoding, after having decoded all $\{u_{\Sigma}^{*}\}_{\mathcal{J}(k)}$,
the decoder looks for unique source sequences from $\bigcap\{\mathcal{B}_{i,\mathcal{K}}(l_{i,\mathcal{K}}):i\in\Sigma_{k},\mathcal{K}\ni k\}$
which are jointly typical with $\{u_{\Sigma}^{*}\}_{\mathcal{J}(k)}$.
Hence what remains is to find conditions on $\tilde{R}_{i,\mathcal{K}}$
to ensure lossless reconstruction of the respective sources at each
sink. Following similar steps as in \cite{Han_Kobayashi,Slepian_Wolf},
it is easy to show that this probability can be made arbitrarily small
if (\ref{eq:thm_2}) is satisfied $\forall\Gamma\in2^{\Sigma_{k}}-\phi$.
We have shown that if the rates satisfy the conditions in Theorem
\ref{thm:main}, the probability of decoding error at each sink can be
made arbitrarily small. Arbitrarily small decoding error ensures that the decoder decodes the correct sequence with very
high probability. Hence, if the rate constraints are satisfied, for
any $\epsilon>0$, we can find a sufficiently large $n$ such that:\begin{equation}
P(\hat{X}_{\sum_{j}}^{n}\neq X_{\sum_{j}}^{n})<\epsilon\end{equation}

Recall that packets from source node $E_{i}$ to sinks $\mathcal{K}\subseteq\Pi_{i}$
carry both $m_{i,\mathcal{K}}$ (at rate $R_{i,\mathcal{K}}^{''}$)
and $l_{i,\mathcal{K}}$ (at rate $\tilde{R}_{i,\mathcal{K}}$). While
the other packets carry only $m_{i,\mathcal{K}}$ (at rate $R_{i,\mathcal{K}}^{''}$).
Hence, the rates of each packet must satisfy the following constraints
for lossless decoding of the requested sources:\begin{equation}
R_{i,\mathcal{K}}\geq\begin{cases}
R_{i,\mathcal{K}}^{''}+\tilde{R}_{i,\mathcal{K}} & \mbox{if }\mathcal{K}\subseteq2^{\Pi_{i}}-\phi\\
R_{i,\mathcal{K}}^{''} & \mbox{if }\mathcal{K}\nsubseteq2^{\Pi_{i}}-\phi\end{cases}\label{eq:thm_3-1}\end{equation}
proving the theorem.\end{proof}
\begin{rem}
\textbf{A note on separability of distributed compression and routing}
: It was shown in \cite{Networked_Slepian_Wolf} that the two problems
of DSC (Slepian-Wolf compression) and optimum \textit{broadcast routing}
are separable problems, i.e., the optimum routes can be found without
the knowledge of the achievable rates, and vice versa, the rate
region can be found without the knowledge of the routes. However,
we demonstrated in \cite{ISIT10} that such separability holds only
under the `no helpers' assumption. We also showed that the extent
of suboptimality due to separating DSC and broadcast routing is substantial and potentially unbounded 
when helpers are allowed to communicate. In general the optimum rate
region cannot be found without the knowledge of the network costs
for broadcast routing. However, for DIR, the two problems of finding
the optimum rate region for the tuple $(R_{i,\mathcal{K}}\,\,\forall i\in\Sigma,\mathcal{K}\in2^{\Pi}-\phi)$
and finding the optimum routes from the source nodes to the sinks can be
separated and dealt independently, without entailing any loss of optimality.
Note that even though DIR has the inherent advantage of separability,
finding the optimum operating point requires optimizing over an $N\times2^{M}$
dimensional space and the effective complexity remains the same as that
for broadcast routing.
\end{rem}

\section{Outerbounds to certain special scenarios\label{sec:Outerbounds-to-certain}}

We note that the converse to the achievability region does not hold
in general. However, we can prove the converse for two important special cases.

\subsection{When there are no helpers\label{sub:outer_no_helper}}
\begin{thm}
When each sink is allowed to receive packets only from sources it intends to reconstruct, the complete rate region for dispersive
information routing is given by: $\forall j\in\Pi$ and $\forall\mathcal{S}\in2^{\Sigma_{j}}-\phi$:
\begin{equation}
\sum_{i\in\mathcal{S}}\sum_{\mathcal{K}\in2^{\Pi_{i}}-\phi,\,\,\mathcal{K}\ni j}R_{i,\mathcal{K}}\geq H\left(\{X\}_{\mathcal{S}}|\{X\}_{\Sigma_{j}-\mathcal{S}}\right)\end{equation}
 It is achieved by `Power Binning'.
\end{thm}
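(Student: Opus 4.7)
This follows directly by specializing Theorem~\ref{thm:main}, as observed in the second remark following that theorem. Setting every auxiliary variable $U_{i,\mathcal{K}}$ to a constant makes $\alpha(i,\mathcal{Q})\equiv 0$ and $\beta(k,\mathcal{Q}_1,\ldots,\mathcal{Q}_N)\equiv 0$ in (\ref{eq:alpha_defn-1}) and (\ref{eq:bound_rate_diff-1}), so one may take $R_{i,\mathcal{K}}^{'}=R_{i,\mathcal{K}}^{''}=0$. The only remaining constraint is (\ref{eq:thm_2}), in which $\gamma_k(\Gamma)$ collapses to $H(\{X\}_{\Gamma}|\{X\}_{\Sigma_k-\Gamma})$; by (\ref{eq:thm_3}) the packet rates satisfy $R_{i,\mathcal{K}}=\tilde{R}_{i,\mathcal{K}}$ for $\mathcal{K}\in 2^{\Pi_i}-\phi$ (with $R_{i,\mathcal{K}}=0$ otherwise, since helpers are disallowed). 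Substituting yields exactly the inequalities in the statement.

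\textbf{Converse.} Fix $j\in\Pi$ and $\mathcal{S}\in 2^{\Sigma_j}-\phi$. Under the no-helpers assumption the messages arriving at $S_j$ are exactly $\mathcal{M}_j=\{T_{i,\mathcal{K}}:i\in\Sigma_j,\,\mathcal{K}\ni j,\,\mathcal{K}\in 2^{\Pi_i}-\phi\}$. Split $\mathcal{M}_j=\mathcal{M}_{j,\mathcal{S}}\cup\mathcal{M}_{j,\Sigma_j-\mathcal{S}}$ according to the originating source; since every $T_{i,\mathcal{K}}$ is a deterministic function of $X_i^n$, the sub-collection $\mathcal{M}_{j,\Sigma_j-\mathcal{S}}$ is completely determined by $\{X\}_{\Sigma_j-\mathcal{S}}^n$. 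Fano's inequality applied to the decoder at $S_j$ gives $H(\{X\}_{\mathcal{S}}^n\mid\mathcal{M}_j)\leq n\epsilon_n$ with $\epsilon_n\to 0$. The standard chain then reads
\begin{align*}
nH(\{X\}_{\mathcal{S}}|\{X\}_{\Sigma_j-\mathcal{S}})
&= H(\{X\}_{\mathcal{S}}^n|\{X\}_{\Sigma_j-\mathcal{S}}^n)\\
&\leq I(\{X\}_{\mathcal{S}}^n;\mathcal{M}_j|\{X\}_{\Sigma_j-\mathcal{S}}^n)+n\epsilon_n\\
&\leq H(\mathcal{M}_j|\{X\}_{\Sigma_j-\mathcal{S}}^n)+n\epsilon_n\\
&= H(\mathcal{M}_{j,\mathcal{S}}|\{X\}_{\Sigma_j-\mathcal{S}}^n)+n\epsilon_n\\
&\leq \sum_{i\in\mathcal{S}}\sum_{\mathcal{K}\ni j,\,\mathcal{K}\in 2^{\Pi_i}-\phi} nR_{i,\mathcal{K}}+n\epsilon_n,
\end{align*}
where the fourth line uses the $\{X\}_{\Sigma_j-\mathcal{S}}^n$-measurability of $\mathcal{M}_{j,\Sigma_j-\mathcal{S}}$ and the last line bounds each message entropy by its nominal rate. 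Dividing by $n$ and letting $n\to\infty$ closes the argument.

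The main conceptual obstacle is concentrated in the step that discards $\mathcal{M}_{j,\Sigma_j-\mathcal{S}}$ after conditioning on $\{X\}_{\Sigma_j-\mathcal{S}}^n$. This is exactly where the no-helpers hypothesis does essential work: if sources outside $\Sigma_j$ were permitted to transmit to $S_j$, their packets could carry information about $\{X\}_{\mathcal{S}}^n$ that is not a function of $\{X\}_{\Sigma_j-\mathcal{S}}^n$---the K$\mathrm{\ddot{o}}$rner--Marton modulo-two-sum phenomenon flagged in the first remark following Theorem~\ref{thm:main}---and the simple additive rate-sum bound would no longer be tight.
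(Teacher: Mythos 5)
Your proposal is correct and follows the same route as the paper: achievability by setting all auxiliary variables $U_{i,\mathcal{K}}$ to constants in Theorem \ref{thm:main} (so that the power-binning constraints (\ref{eq:thm_2}) remain), and a converse which is exactly the standard Slepian--Wolf/Fano cut-set argument that the paper invokes by reference to the lossless source coding converse and omits as straightforward. Your explicit observation that the no-helpers hypothesis makes $\mathcal{M}_{j,\Sigma_j-\mathcal{S}}$ a deterministic function of $\{X\}_{\Sigma_j-\mathcal{S}}^n$ is precisely the step the paper leaves implicit, and it is handled correctly.
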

\begin{proof} In the achievable rate region of Theorem \ref{thm:main},
setting $U_{i,\mathcal{S}}=\Phi\,\,\forall i\in\Sigma,\mathcal{S}\in2^{\Pi}-\phi$,
where $\Phi$ is a constant, leads to the above rate region. The converse
to this rate region follows directly from the converse to the lossless source
coding theorem \cite{Cover-book}. We omit the proof as it is straightforward.
\end{proof}

\subsection{A 2-Sink network with a single helper\label{sub:outer_1_helper}}

\begin{figure}
\centering\includegraphics[scale=0.3]{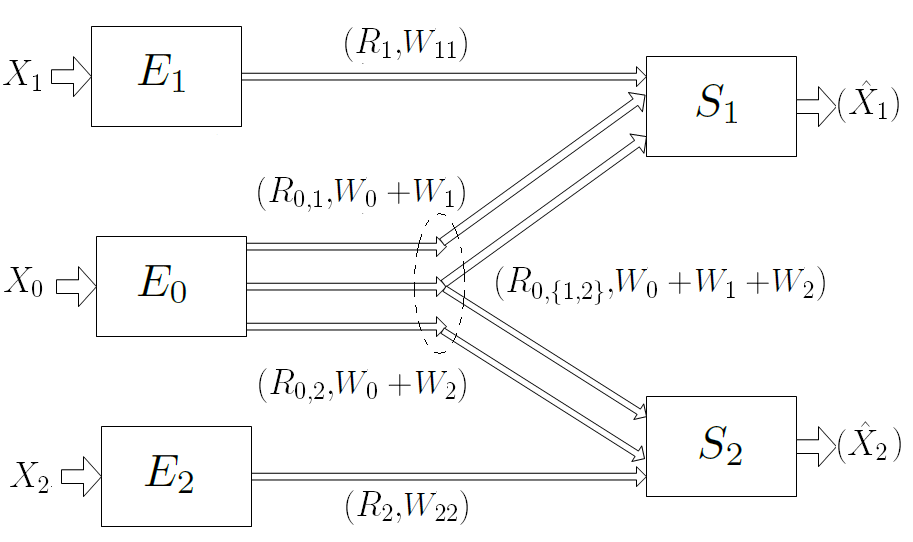}\caption{Example of a 2-sink, 1 Helper DIR\label{fig:2_sink_1_help}}

\end{figure}

The converse can be proven in general for any 2 sink network with
a single helper. However, to avoid complex notation, we just give
a simple example of a 2 sink network with a single helper and prove
the converse to the rate region. The proof of converse for a general
2 sink network with a single helper follows in similar lines.

Consider the network shown in Fig. \ref{fig:2_sink_1_help}, with
3 source nodes and 2 sinks. The three source nodes $E_{1},E_{0},E_{2}$ observe
three correlated memoryless random sequences $X_{1}^n,X_{0}^n,X_{2}^n$, respectively.
The two sinks $S_{1}$ and $S_{2}$ respectively reconstruct $X_{1}^n$
and $X_{2}^n$ losslessly. Note that $E_{0}$ acts as a helper to both
the sinks. Our objective is to find the rate region for the tuple
$(R_{1},R_{2},R_{0,1},R_{0,2},R_{0,\{1,2\}})$ for lossless reconstruction
of the respective sources. It is important to remember that our ultimate
objective is to find the minimum communication cost, which follows
by finding the point in the rate region that minimizes the following cost function:\begin{eqnarray}
C_{DIR}^{*}=W_{11}R_{1}+W_{22}R_{2}+(W_{0}+W_{1})R_{0,1}\nonumber \\
+(W_{0}+W_{2})R_{0,2}+(W_{0}+W_{1}+W_{2})R_{0,\{1,2\}}\end{eqnarray}The following theorem establishes the
complete rate region.
\begin{thm}
\label{thm:helper}Let $(U_{0},U_{1},U_{2})$ be random variables
distributed over arbitrary finite sets $\mathcal{U}_{0}\times\mathcal{U}_{1}\times\mathcal{U}_{2}$,
jointly distributed with $(X_{1},X_{0},X_{2})$ such that the following
hold: \begin{eqnarray}
X_{1} & \leftrightarrow & X_{0}\leftrightarrow(U_{0},U_{1},U_{2})\nonumber \\
X_{2} & \leftrightarrow & X_{0}\leftrightarrow(U_{0},U_{1},U_{2})\label{eq:2_sink_1_helper_markov}\end{eqnarray}
Then any rate tuple satisfying the following constraints is achievable
for the 2-Sink 1-Helper DIR problem:\begin{eqnarray}
R_{0,12} & \geq & I(X_{0};U_{0})\nonumber \\
R_{0,1} & \geq & I(X_{0};U_{1}|U_{0})\nonumber \\
R_{0,2} & \geq & I(X_{0};U_{2}|U_{0})\nonumber \\
R_{1,1} & \geq & H(X_{1}|U_{0},U_{1})\nonumber \\
R_{2,2} & \geq & H(X_{2}|U_{0},U_{2})\label{eq:2_sink_1_helper_them}\end{eqnarray}
The closure of the achievable rates over all such $(U_{0},U_{1},U_{2})$
is the complete rate region for this setup. \end{thm}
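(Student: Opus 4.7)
The proof splits into achievability and a converse.

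For achievability, the plan is to specialize Theorem \ref{thm:main} under the identifications $U_{0,12} \equiv U_0$, $U_{0,1} \equiv U_1$, $U_{0,2} \equiv U_2$ at the helper $E_0$, and with all auxiliary variables at the non-helper encoders $E_1, E_2$ set to constants (so only the stage-3 power-binning rates $\tilde R_{1,1}$ and $\tilde R_{2,2}$ remain active there). Under this choice the general joint Markov constraint (\ref{eq:thm_Markov}) specializes exactly to the two Markov conditions (\ref{eq:2_sink_1_helper_markov}); the stage-2 constraints (\ref{eq:thm_0})--(\ref{eq:thm_1}) collapse to $R_{0,12} \geq I(X_0;U_0)$, $R_{0,1} \geq I(X_0;U_1 \mid U_0)$, $R_{0,2} \geq I(X_0;U_2 \mid U_0)$; and the stage-3 bounds (\ref{eq:thm_2})--(\ref{eq:thm_3}) yield $R_{1,1} \geq H(X_1 \mid U_0, U_1)$ and $R_{2,2} \geq H(X_2 \mid U_0, U_2)$, exactly matching (\ref{eq:2_sink_1_helper_them}).

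For the converse, I will use the standard single-letterization argument. Given any $(n, P_e, L_{\cdot,\cdot})$ DIR-code with vanishing $P_e$, the helper $E_0$ produces $(M_{0,12}, M_{0,1}, M_{0,2})$ from $X_0^n$, while $E_1, E_2$ produce $M_1, M_2$ from $X_1^n, X_2^n$ respectively. Define time-indexed auxiliaries
\begin{equation}
U_{0,t} = (M_{0,12}, X_0^{t-1}), \quad U_{1,t} = M_{0,1}, \quad U_{2,t} = M_{0,2}.
\end{equation}
All three are functions of $X_0^n$ alone, so since the IID joint source gives $X_{j,t} \perp (X_0^n \setminus X_{0,t}) \mid X_{0,t}$ for $j \in \{1,2\}$, the pointwise Markov chains $X_{j,t} \leftrightarrow X_{0,t} \leftrightarrow (U_{0,t}, U_{1,t}, U_{2,t})$ hold. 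Chain-rule expansion combined with the IID property then yields $nR_{0,12} \geq I(X_0^n; M_{0,12}) = \sum_t I(X_{0,t}; U_{0,t})$, and analogously $nR_{0,1} \geq H(M_{0,1} \mid M_{0,12}) \geq \sum_t I(X_{0,t}; U_{1,t} \mid U_{0,t})$, with a symmetric bound for $R_{0,2}$. For $R_{1,1}$, Fano's inequality at $S_1$ yields $H(M_1 \mid M_{0,12}, M_{0,1}) \geq H(X_1^n \mid M_{0,12}, M_{0,1}) - n\epsilon_n$; expanding the right-hand side by chain rule over $t$ and discarding the redundant conditioning on $X_1^{t-1}$ produces $nR_{1,1} \geq \sum_t H(X_{1,t} \mid U_{0,t}, U_{1,t}) - n\epsilon_n$, and symmetrically for $R_{2,2}$. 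A standard time-sharing variable $Q$ uniform on $\{1,\ldots,n\}$, absorbed into $U_0' = (U_{0,Q}, Q)$, single-letterizes all five bounds into (\ref{eq:2_sink_1_helper_them}) as $n \to \infty$.

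The main obstacle is the $R_{1,1}$ bound (and symmetrically $R_{2,2}$): justifying $H(X_{1,t} \mid X_1^{t-1}, U_{0,t}, U_{1,t}) = H(X_{1,t} \mid U_{0,t}, U_{1,t})$, i.e., that $X_1^{t-1}$ becomes redundant once $X_0^{t-1}$ (embedded in $U_{0,t}$) is in the conditioning. This reduces to the conditional independence $X_1^{t-1} \perp (X_{1,t}, X_{0,t}^n) \mid X_0^{t-1}$, which follows because given $X_0^{t-1}$ the past values $X_1^{t-1}$ are memoryless noise of $X_0^{t-1}$ alone and are hence independent of all coordinates indexed by $t$ or later. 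With this step in hand, the remainder is a routine application of Fano, the chain rule, and time-sharing.
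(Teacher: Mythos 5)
Your converse is sound and follows essentially the same route as the paper: the identifications $U_{0,t}=(T_{0,12},X_0^{t-1})$, $U_{1,t}=T_{0,1}$, $U_{2,t}=T_{0,2}$, the per-letter Markov chains from the memoryless source, Fano at each sink, the conditional-independence step that lets you discard $X_1^{t-1}$ once $X_0^{t-1}$ is in the conditioning, and a time-sharing variable absorbed into the auxiliaries. That part needs no change.

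The gap is in achievability. Specializing Theorem \ref{thm:main} with $U_{0,12}=U_0$, $U_{0,1}=U_1$, $U_{0,2}=U_2$ and constants elsewhere does \emph{not} collapse to the five individual constraints of (\ref{eq:2_sink_1_helper_them}): because the codewords of $U_1$ and $U_2$ are generated conditionally independently given the $U_0$ codeword at the single encoder $E_0$ and must be found jointly typical with $x_0^n$ (mutual covering), the specialization retains sum-rate constraints, in particular
\begin{equation}
R_{0,12}+R_{0,1}+R_{0,2}\;\geq\; I(X_{0};U_{0},U_{1},U_{2})+I(U_{1};U_{2}|U_{0}),
\end{equation}
as in (\ref{eq:2_sink_1_helper_them-1}). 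For a general $(U_0,U_1,U_2)$ satisfying only (\ref{eq:2_sink_1_helper_markov}), a rate tuple meeting (\ref{eq:2_sink_1_helper_them}) with equality can violate this constraint because of the cross term $I(U_1;U_2|U_0)$, so your claimed "collapse" is false as stated. The missing idea (which the paper supplies) is to additionally impose the Markov condition $U_1\leftrightarrow(X_0,U_0)\leftrightarrow U_2$ as in (\ref{eq:pairwise_markov}); under it the sum-rate bound reduces to $I(X_0;U_0)+I(X_0;U_1|U_0)+I(X_0;U_2|U_0)$ and becomes redundant, and this restriction costs nothing for the claimed region because every bound in (\ref{eq:2_sink_1_helper_them}) depends only on the marginals $P(X_0,X_1,U_0,U_1)$ and $P(X_0,X_2,U_0,U_2)$, which can always be preserved while making $U_1$ and $U_2$ conditionally independent given $(X_0,U_0)$. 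Add that step and your achievability argument matches the theorem; without it, the proof does not establish the full region claimed.
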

\begin{proof}
\textbf{Achievability} : Let $(U_{0},U_{1},U_{2})$ be any random
variables satisfying (\ref{eq:2_sink_1_helper_markov}). The following
achievable rate region is obtained by setting $U_{0,12}=U_{0}$, $U_{0,1}=U_{1}$, $U_{0,12}=U_{2}$ and all the remaining random variables to constants in the general achievable rate region of Theorem \ref{thm:main}:\begin{eqnarray}
R_{0,12} & \geq & I(X_{0};U_{0})\nonumber \\
R_{0,12}+R_{0,1} & \geq & I(X_{0};U_{0})+I(X_{0};U_{1}|U_{0})\nonumber \\
R_{0,12}+R_{0,2} & \geq & I(X_{0};U_{0})+I(X_{0};U_{2}|U_{0})\nonumber \\
R_{0,12}+R_{0,1}+R_{0,2} & \geq & I(X_{0};U_{1},U_{2},U_{0})+I(U_{1};U_{2}|U_{0})\nonumber \\
R_{1,1} & \geq & H(X_{1}|U_{0},U_{1})\nonumber \\
R_{2,2} & \geq & H(X_{2}|U_{0},U_{2})\label{eq:2_sink_1_helper_them-1}\end{eqnarray}
We further restrict the joint density to satisfy the following Markov condition in addition to  (\ref{eq:2_sink_1_helper_markov}):\begin{equation}
U_{1}\leftrightarrow(X_{0},U_{0})\leftrightarrow U_{2}\label{eq:pairwise_markov}\end{equation}
On using this Markov condition in (\ref{eq:2_sink_1_helper_them-1}),
the sum rate constraint on $R_{0,12}+R_{0,1}+R_{0,2}$ becomes:\begin{eqnarray}
R_{0,12}+R_{0,1}+R_{0,2}\geq I(X_{0};U_{0})+I(X_{0};U_{1}|U_{0})& & \nonumber \\ 
+I(X_{0};U_{2}|U_{0}) & & \label{eq:redundant}\end{eqnarray}
Observe that if a rate tuple satisfies (\ref{eq:2_sink_1_helper_them}), then it also satisfies  (\ref{eq:2_sink_1_helper_them-1}) and hence the region given by (\ref{eq:2_sink_1_helper_them}) is achievable for the 2-Sink 1-Helper problem shown in Fig. \ref{fig:2_sink_1_help}.

\textbf{Converse} : Recall the notation in the definition of an achievable
rate region in Section \ref{sub:general_ach_region}. The output of
encoder 1 is denoted $f_{1}^{E}(X_{1}^{n})=T_{1}$ and the output
of encoder 2 is $f_{2}^{E}(X_{2}^{n})=T_{2}$. Remember that $0\leq T_{1}\leq 2^{nR_{1}}$
and $0\leq T_{1}\leq 2^{nR_{2}}$. Similarly the encoder at $E_{0}$ transmits
3 indices denoted by $(T_{0,1},T_{0,2},T_{0,12})$ which are routed
to the respective sinks. Sink $S_{1}$ receives $(T_{1},T_{0,1},T_{0,12})$
and reconstructs $X_{1}^{n}$ with vanishing probability of error.
Similarly sink $S_{2}$ receives $(T_{2},T_{0,2},T_{0,12})$ and reconstructs
$X_{2}^{n}$ losslessly. We need to prove that for any code with vanishing
probability of error, the rates must satisfy (\ref{eq:2_sink_1_helper_them})
for some $(U_{0},U_{1},U_{2})$ satisfying (\ref{eq:2_sink_1_helper_markov}).

We follow standard converse techniques to prove the above claim. We
begin with the following series of inequalities:\begin{eqnarray}
nR_{0,12} & \geq & H(T_{0,12})\geq I(X_{0}^{n};T_{0,12})\nonumber \\
 & = & \sum_{i=1}^{n}I(X_{0}^{i};T_{0,12}|X_{0}^{1,i-1})\nonumber \\
 & =^{(a)} & \sum_{i=1}^{n}I(X_{0}^{i};T_{0,12},X_{0}^{1,i-1})\nonumber \\
 & =^{(b)} & \sum_{i=1}^{n}I(X_{0}^{i};U_{0,12}^{i})\label{eq:1h_conv_1}\end{eqnarray}
 where $(a)$ follows from the memoryless property of the sources
and $(b)$ follows by setting $U_{0,12}^{i}=(T_{0,12},X_{0}^{1,i-1})$.
Here $X_{0}^{i}$ denotes the $i$'th realization of $X_{0}^n$ and
$X_{0}^{1,i-1}$ denotes the first $i-1$ realizations of $X_{0}^n$.
Next we have:\begin{eqnarray}
nR_{0,1} & \geq & H(T_{0,1})\geq H(T_{0,1}|T_{0,12})\nonumber \\
 & \geq & I(X_{0}^{n};T_{0,1}|T_{0,12})\nonumber \\
 & = & \sum_{i=1}^{n}I(X_{0}^{i};T_{0,1}|T_{0,12},X_{0}^{1,i-1})\nonumber \\
 & = & \sum_{i=1}^{n}I(X_{0}^{i};U_{0,1}^{i}|U_{0,12}^{i})\label{eq:1h_conv_2}\end{eqnarray}
 Where $U_{0,1}^{i}=(T_{0,1})\,\,\forall i$. Similarly, we can show
that:\begin{equation}
nR_{0,2}\geq\sum_{i=1}^{n}I(X_{0}^{i};U_{0,2}^{i}|U_{0,12}^{i})\label{eq:1h_conv_3}\end{equation}
 where $U_{0,2}^{i}=(T_{0,2})\,\,\forall i$. Note that as $(T_{0,1},T_{0,2},T_{0,12},X_{0}^{1,i-1})$
depends on $(X_{1}^{i},X_{2}^{i})$ only through $X_{0}^{i}$, we
have the following two Markov chain conditions:\begin{eqnarray}
X_{1}^{i} & \leftrightarrow & X_{0}^{i}\leftrightarrow(U_{0}^{i},U_{1}^{i},U_{2}^{i})\nonumber \\
X_{2}^{i} & \leftrightarrow & X_{0}^{i}\leftrightarrow(U_{0}^{i},U_{1}^{i},U_{2}^{i})\label{eq:1h_conv_4}\end{eqnarray}
 Further, we need lossless reconstruction of $X_{1}^{n}$ at $S_{1}$. The following series of inequalities hold:\begin{eqnarray}
nR_{1} & \geq & H(T_{1})\nonumber \\
 & \geq & H(T_{1}|T_{0,12},T_{0,1})\nonumber \\
 & = & H(T_{1}|T_{0,12},T_{0,1})+H(X_{1}^{n}|T_{0,12},T_{0,1},T_{1})\nonumber \\
 &  & -H(X_{1}^{n}|T_{0,12},T_{0,1},T_{1})\nonumber \\
 & \geq^{(a)} & H(X_{1}^{n},T_{1}|T_{0,12},T_{0,1})-n\epsilon_{n}\nonumber \\
 & = & H(X_{1}^{n}|T_{0,12},T_{0,1})-n\epsilon_{n}\nonumber \\
 & = & \sum_{i=1}^{n}H(X_{1}^{i}|X_{1}^{i-1},T_{0,12},T_{0,1})-n\epsilon_{n}\nonumber \\
 & = & \sum_{i=1}^{n}H(X_{1}^{i}|U_{0,12}^{i},U_{0,1}^{i})-n\epsilon_{n}\label{eq:1h_conv_5}\end{eqnarray}
 where $(a)$ follows from Fano's inequality, i.e., $H(X_{1}^{n}|T_{1},T_{0,1},T_{0,12})<n\epsilon_{n}$. Similarly, for lossless
reconstruction at $S_{2}$, we have:\begin{equation}
nR_{2}\geq\sum_{i=1}^{n}H(X_{2}^{i}|U_{0,12}^{i},U_{0,2}^{i})-n\epsilon_{n}\label{eq:1h_conv_6}\end{equation}
 We next introduce a time sharing random variable $Q\sim\mbox{Unif}[1:n]$,
independent of $(X_{0}^{n},X_{1}^{n},X_{2}^{n},U_{0,1}^{n},U_{0,2}^{n},U_{0,12}^{n})$,
so that we can rewrite (\ref{eq:1h_conv_1}), (\ref{eq:1h_conv_2}),
(\ref{eq:1h_conv_3}), (\ref{eq:1h_conv_5}) and (\ref{eq:1h_conv_6})
as:\begin{eqnarray}
nR_{0,12} & \geq & I(X_{0}^{Q};U_{0,12}^{Q}|Q)=I(X_{0}^{Q};U_{0,12}^{Q},Q)\nonumber \\
nR_{0,1} & \geq & I(X_{0}^{Q};U_{0,1}^{Q}|U_{0,12}^{Q},Q)\nonumber \\
 & = & I(X_{0}^{Q};U_{0,1}^{Q},Q|U_{0,12}^{Q},Q)\nonumber \\
nR_{0,2} & \geq & I(X_{0}^{Q};U_{0,2}^{Q}|U_{0,12}^{Q},Q)\nonumber \\
 & = & I(X_{0}^{Q};U_{0,2}^{Q},Q|U_{0,12}^{Q},Q)\nonumber \\
nR_{1} & \geq & H(X_{1}^{Q}|U_{0,12}^{Q},U_{0,1}^{Q},Q)\nonumber \\
nR_{2} & \geq & H(X_{2}^{Q}|U_{0,12}^{Q},U_{0,2}^{Q},Q)\label{eq:1h_conv_7}\end{eqnarray}
 Setting $(U_{0,12}^{Q},Q)=U_{0,12}$, $(U_{0,1}^{Q},Q)=U_{0,1}$,
$(U_{0,2}^{Q},Q)=U_{0,2}$ and observing that $(X_{0}^{Q},X_{1}^{Q},X_{2}^{Q})$
has the same density as $(X_{0},X_{1},X_{2})$ we get the rate region
given in (\ref{eq:2_sink_1_helper_them}).
\end{proof}
\textbf{Example to demonstrate strict improvement:} Next we show that
DIR achieves strictly lower communication cost for the single helper
network shown in Fig. \ref{fig:2_sink_1_help}. This example demonstrates
the freedom DIR provides over broadcast routing by sending only the
relevant information to each sink, even when the information is from
a helper. The complete rate region under broadcast routing for the
example shown in Fig. \ref{fig:2_sink_1_help} was determined in \cite{Wyner_Source_coding,Ahlswede_side_info}
and is given by the closure of the following rate tuples over
all random variables $U_{0}$ satisfying $(X_{1},X_{2})\leftrightarrow X_{0}\leftrightarrow U_{0}$:\begin{eqnarray}
R_{0,12} & \geq & I(X_{0};U_{0})\nonumber \\
R_{1,1} & \geq & H(X_{1}|U_{0})\nonumber \\
R_{2,2} & \geq & H(X_{2}|U_{0})\label{eq:2_sink_1_helper_them-2}\end{eqnarray}

We consider the example where $(X_{0},X_{1},X_{2})$ are binary
symmetric sources such that $X_{1}\leftrightarrow X_{0}\leftrightarrow X_{2}$
holds. The transition probabilities are such that $X_{1}$ and $X_{2}$
are obtained as outputs of two independent binary symmetric channels
with $X_{0}$ as input and cross-over probabilities of $P_{1}$ and
$P_{2}$, respectively. Let us say that the network costs are such
that $E_{1}$ and $E_{2}$ send at rates $\Delta$ more than their
respective conditional entropies (for some $\Delta>0$), i.e., $R_{1}=H_{b}(P_{1})+\Delta$
and $R_{2}=H_{b}(P_{2})+\Delta$ where $H_{b}(\cdot)$
denotes the binary entropy function (note that the conditional entropy
is the minimum information each encoder has to send). Wyner \cite{Wyner_Source_coding}
(see also \cite{Effros}) showed that the minimum rate from $E_{0}$
to the two sinks under broadcast routing is given by:\begin{align}
R_{0} & \geq\max_{P_{0}\in\{P_{01},P_{02}\}}1-H_{b}(P_{0})\label{eq:example_br}\end{align}
where $P_{01}$ and $P_{02}$ solve the respective equations $H_{b}(P_{1}\bullet P_{01})=H_{b}(P_{1})+\Delta$
and $H_{b}(P_{2}\bullet P_{02})=H_{b}(P_{2})+\Delta$ where $P_{1}\bullet P_{2}=P_{1}P_{2}+(1-P_{1})(1-P_{2})$.
The optimum $U_{0}$ which achieves the boundary points is obtained
by passing $X_{0}$ through a binary symmetric channel (BSC) with
cross over probability $P_{0}$. Again observe that, if the sinks
$S_{1}$ and $S_{2}$ receive information from $E_{1}$ and $E_{2}$
at rates $H_{b}(P_{1})+\Delta$ and $H_{b}(P_{2})+\Delta$, they require
information from $E_{0}$ at rates $1-H_{b}(P_{01})$ and $1-H_{b}(P_{02})$, 
respectively. However, broadcast routing sends information at the maximum
of the two to both sinks and hence if $P_{1}\neq P_{2}$ (which
in turn implies $P_{01}\neq P_{02}$ in general), there is sub-optimality
on either one of the two branches connecting from the collector to
the two sinks. 

On the other hand, using DIR, we can achieve minimum
rates on all the branches. To prove this claim, without loss of generality, 
let us assume that $0.5>P_{01}>P_{02}>0$. Consider the following
joint density for $(U_{0},U_{1},U_{2})$ in Theorem \ref{thm:helper}.
$U_{2}$ is the output when $X_{0}$ is sent through a BSC with cross
over probability $P_{02}$ and $U_{0}$ is the output when $U_{2}$
is sent through a BSC with cross over probability $P_{012}$ where
$P_{02}\bullet P_{012}=P_{01}$. $U_{1}$ is set as a constant. It
is easy to verify from Theorem \ref{thm:helper} that the following rates are achievable:\begin{align}
R_{0,12} & =1-H_{b}(P_{01})\nonumber \\
R_{0,2} & =H_{b}(P_{01})-H_{b}(P_{02})\label{eq:example_DIR}\end{align}
which implies that the two sinks receive at their respective minima
leading to the conclusion that DIR achieves the minimum communication cost for this example.

\section{Conclusion\label{sec:Conclusion}}

This paper considers a new routing paradigm called dispersive information
routing, wherein each intermediate node is allowed to ``split a packet''
and forward subsets of the information on individual forward paths. We demonstrated
using simple examples the gains of DIR over broadcast routing. Unlike
network coding, this new routing technique can be realized using conventional
routers with source nodes transmitting multiple smaller packets into the
network. This paradigm introduces a new class of information theoretic
problems. We derived an achievable rate region for this setup using
principles from multiple descriptions encoding and Han and Kobayashi
decoding which is complete for certain special cases of the setup.

\bibliographystyle{IEEEtran}
\bibliography{DIR}

\appendix

\section*{Appendix A: Bounding Encoding/Decoding Errors in Theorem \ref{thm:main}
\label{app:Proof-of-Theorem}}

\begin{proof} 

\textit{Probability of encoding error} : Let us analyze the probability
of encoding error at source node $E_i$. Let $\mathcal{E}$ denote the event of an encoding
error. We have:\begin{eqnarray}
P(\mathcal{E}) & = & P(\mathcal{E}|x_{i}^{n}\in\mathcal{T}_{\epsilon}^{n})P(x_{i}^{n}\in\mathcal{T}_{\epsilon}^{n})\nonumber \\
 &  & +P(\mathcal{E}|x_{i}^{n}\notin\mathcal{T}_{\epsilon}^{n})P(x_{i}^{n}\notin\mathcal{T}_{\epsilon}^{n})\end{eqnarray}
 From standard typicality arguments, we have $P(x_{i}^{n}\notin\mathcal{T}_{\epsilon}^{n})\rightarrow0$
as $n\rightarrow\infty$. Hence, it is sufficient to find conditions
on the rates to bound $P(\mathcal{E}|x_{i}^{n}\in\mathcal{T}_{\epsilon}^{n})$. 

Towards finding conditions on the rate to bound $P(\mathcal{E}|x_{i}^{n}\in\mathcal{T}_{\epsilon}^{n})$,
we define the random variables :
\begin{equation}
\chi(\{j\}_{\mathcal{J}(\Pi)})=\begin{cases}
1 & \mbox{if}\,\,\left(x_{i}^{n},u_{i}^{n}(\{j\}_{\mathcal{J}(\Pi)})\right)\in\mathcal{T}_{\epsilon}^{n}\\
0 & \mbox{else}\end{cases}\label{eq:xi_defn}\end{equation}We have $P(\mathcal{E}|x_{i}^{n}\in\mathcal{T}_{\epsilon}^{n})=P(\Psi=0)$
where $\Psi=\sum_{\mathcal{J}(\Pi)}\chi(\{j\}_{\mathcal{J}(\Pi)})$.
From Chebyshev's inequality, it follows that:\begin{equation}
P(\Psi=0)\leq P\left[|\Psi-E[\Psi]|\geq E[\Psi]/2\right]\leq\frac{4\mbox{Var}(\Psi)}{\left(E[\Psi]\right)^{2}}\label{eq:P_3}\end{equation}
 From Lemma 3.1 in \cite{Han_Kobayashi}, we can bound $E[\Psi]$
as follows:\begin{equation}
E[\Psi]\geq2^{n\sum_{\mathcal{K}\in\mathcal{J}(\Pi)}R_{\mathcal{K}}^{'}-n\left(\alpha(i,\mathcal{J}(\Pi))+\epsilon\right)}\label{eq:E_xi}\end{equation}
 where\begin{eqnarray}
\alpha(i,\mathcal{Q}) & = & -H\left(\{U_{i}\}_{\mathcal{Q}}|X_{i}\right)\nonumber \\
 &  & +\sum_{\mathcal{K}\in\mathcal{Q}}H\left(U_{i,\mathcal{K}}|\{U_{i}\}_{\mathcal{I}_{|\mathcal{K}|+}(\mathcal{K})}\right)\label{eq:alpha_defn}\end{eqnarray}
 $\forall i,\mathcal{Q}\subseteq\mathcal{J}(\Pi)$. We follow the
convention $\alpha_{W}(i,\phi)=0$. Next consider $\mbox{Var}(\Psi)=E[\Psi^{2}]-\left(E[\Psi]\right)^{2}$
where, \begin{eqnarray}
E[\Psi^{2}]=\sum_{\{j\}_{\mathcal{J}(\Pi)}}\sum_{\{k\}_{\mathcal{J}(\Pi)}}E\left[\chi(\{j\}_{\mathcal{J}(\Pi)})\chi(\{k\}_{\mathcal{J}(\Pi)})\right]\nonumber \\
=\sum_{\{j\}_{\mathcal{J}(\Pi)}}\sum_{\{k\}_{\mathcal{J}(\Pi)}}P\left[\chi(\{j\}_{\mathcal{J}(\Pi)})=1,\chi(\{k\}_{\mathcal{J}(\Pi)})=1\right]\label{eq:E_xi_s}\end{eqnarray}
The probability in (\ref{eq:E_xi_s}) depends on whether $u_{i}^{n}(\{j\}_{\mathcal{J}(\Pi)})$
and $u_{i}^{n}(\{k\}_{\mathcal{J}(\Pi)})$ are equal for a subset
of indices. Let $\mathcal{Q}\subseteq\mathcal{J}(\Pi),\,\,\mathcal{Q}\neq\phi$,
such that $\{j\}_{\mathcal{Q}}=\{k\}_{\mathcal{Q}}$. Observe that,
due to the hierarchical structure in the conditional codebook generation
mechanism, for $u_{i}^{n}(\{j\}_{\mathcal{Q}})=u_{i}^{n}(\{k\}_{\mathcal{Q}})$
to hold, $\mathcal{Q}$ must be such that,\begin{equation}
\mbox{if }\mathcal{K}\in\mathcal{Q}\,\,\Rightarrow\,\,\,\mathcal{I}_{|\mathcal{K}|+}(\mathcal{K})\subset\mathcal{Q}\label{eq:cond_Q}\end{equation}
i.e., $\mathcal{Q}\in\mathcal{Q}^{*}$ given in (\ref{eq:cond_Q_main}).
It follows from the codebook generation mechanism that given the codeword
tuple $\{u_{i}^{n}(\{j\}_{\mathcal{Q}})\}$, tuples $\{u_{i}^{n}(\{j\}_{\mathcal{J}(\Pi)-\mathcal{Q}})\}$
and $\{u_{i}^{n}(\{k\}_{\mathcal{J}(\Pi)-\mathcal{Q}})\}$ are independent
and identically distributed. Hence we can rewrite the probability
in (\ref{eq:E_xi_s}) for some $\mathcal{Q}\subseteq\mathcal{J}(\Pi),\,\,\mathcal{Q}\neq\phi$,
as:\begin{eqnarray}
P\left[\mathcal{E}(\{j\}_{\mathcal{J}(\Pi)})\cap\mathcal{E}(\{k\}_{\mathcal{J}(\Pi)})\right]=\left(\frac{P\left[\mathcal{E}(\{j\}_{\mathcal{J}(\Pi)})\right]}{P\left[\mathcal{E}(\{j\}_{\mathcal{Q}})\right]}\right)^{2}\nonumber \\
\times P\left[\mathcal{E}(\{j\}_{\mathcal{Q}})\right]\label{eq:upper_bound_p_var}\end{eqnarray}
However, note that if $\mathcal{Q}=\phi$, then:\begin{eqnarray} P\left[\mathcal{E}(\{j\}_{\mathcal{J}(\Pi)})\cap\mathcal{E}(\{k\}_{\mathcal{J}(\Pi)})\right]=\left(P\left[\mathcal{E}(\{j\}_{\mathcal{J}(\Pi)})\right]\right)^{2}\end{eqnarray}
Next, the total number of ways of choosing $\{j\}_{\mathcal{J}(\Pi)}$
and $\{k\}_{\mathcal{J}(\Pi)}$ such that they overlap in the subset
$\mathcal{Q}$ is: \begin{eqnarray}
2^{n\sum_{\mathcal{K}\in\mathcal{Q}}R_{i,\mathcal{K}}^{'}}\prod_{\mathcal{K}\in\mathcal{J}(\Pi)-\mathcal{Q}}2^{nR_{i,\mathcal{K}}^{'}}(2^{nR_{i,\mathcal{K}}^{'}}-1)\nonumber \\
\leq2^{n\{\sum_{\mathcal{K}\in\mathcal{J}(\Pi)}R_{i,\mathcal{K}}^{'}+2\sum_{\mathcal{K}\in\mathcal{J}(\Pi)-\mathcal{Q}}R_{i,\mathcal{K}}^{'}\}}\label{eq:upper_bound_rateE_var}\end{eqnarray}
On substituting (\ref{eq:upper_bound_p_var}) and (\ref{eq:upper_bound_rateE_var})
in (\ref{eq:E_xi_s}), we bound $\mbox{Var}(\Psi)$ as:

\begin{eqnarray}
\mbox{Var}(\Psi) & \leq & \sum\biggl\{2^{-2n\left(\alpha(i,\mathcal{J}(\Pi))-\sum_{\mathcal{K}\in\mathcal{J}(\Pi)}R_{i,\mathcal{K}}^{'}\right)}\nonumber \\
 &  & 2^{n\left(\alpha(i,\mathcal{Q})-\sum_{\mathcal{K}\in\mathcal{Q}}R_{i,\mathcal{K}}^{'}\right)+5n\epsilon}\biggl\}\label{eq:Var_xi}\end{eqnarray}
where the summation is over all non-empty $\mathcal{Q}$ such that
(\ref{eq:cond_Q}) holds. Observe that the term corresponding to
$\mathcal{Q}=\phi$ gets canceled with the `$(\mathcal{E}[\Psi])^{2}$' term
in $\mbox{Var}(\Psi)$. Inserting, (\ref{eq:Var_xi}) and (\ref{eq:E_xi})
in (\ref{eq:P_3}), we get :\begin{equation}
P(E|x_{i}^{n}\in\mathcal{T}_{\epsilon}^{n})\leq4\sum2^{n\left(\alpha(i,\mathcal{Q})-\sum_{\mathcal{K}\in\mathcal{Q}}R_{\mathcal{K}}^{'}\right)+7n\epsilon}\label{eq:P_4}\end{equation}
where the summation is over all non-empty $\mathcal{Q}$ satisfying
(\ref{eq:cond_Q}). Hence, the probability of encoding error at all
the source nodes can be made arbitrarily small if:

\begin{equation}
\sum_{\mathcal{K}\in\mathcal{Q}}R_{i,\mathcal{K}}^{'}\geq\alpha(i,\mathcal{Q})+7\epsilon\label{eq:proof_1}\end{equation}
$\forall i,\mathcal{Q}$ satisfying (\ref{eq:cond_Q}).

\textit{Probability of decoding error} : We focus on decoding at sink
$S_{k}$. We first bound the probability of error for the first stage
of decoding. The decoder looks for a unique codeword tuple from $\Bigl\{\{\mathcal{C}_{\Sigma}\}_{\mathcal{J}(k)}\bigl(\{m_{\Sigma}\}_{\mathcal{J}(k)}\bigr)\Bigr\}$
which are jointly typical. We know that $\{u_{\Sigma}^{*}\}_{\mathcal{J}(k)}$
are jointly typical from the Markov Lemma in Appendix B.
We have to find conditions on $R_{i,\mathcal{S}}^{''}$ to ensure
no other tuple satisfies this property. Denote by $\mathcal{F}$ the
event of a decoding error given the encoding is error-free. Due to
the symmetry in codebook generation, we can assume that the index
tuple of $\{u_{\Sigma}^{*}\}_{\mathcal{J}(k)}$ is $(1,\ldots,1)$.
Let $\{j_{\Sigma}\}_{\mathcal{J}(k)}$ be an index tuple such that:\begin{equation}
\{j_{\Sigma}\}_{\mathcal{J}(k)}\neq(1,\ldots,1)\end{equation}
Define the event $\mathcal{F}(\{j_{\Sigma}\}_{\mathcal{J}(k)})$ as:\begin{equation}
\mathcal{F}(\{j_{\Sigma}\}_{\mathcal{J}(k)})=\Bigl\{\left(u_{\Sigma}^{n}(\{j_{\Sigma}\}_{\mathcal{J}(k)})\right)\in\mathcal{T}_{\epsilon}^{n}\Bigr\}\end{equation}
It then follows from union bound that:\begin{eqnarray}
P(\mathcal{F}) & \leq & \sum P\bigl(\mathcal{F}(\{j_{\Sigma}\}_{\mathcal{J}(k)})\bigr)\label{eq:union_bound}\end{eqnarray}
where the summation is over all $\{j_{\Sigma}\}_{\mathcal{J}(k)}\neq(1,\ldots,1)$.
However, a subset of indices of $\{j_{\Sigma}\}_{\mathcal{J}(k)}$
can still be equal to $1$. We expand the above summation over all
such possible subsets. Let $\mathcal{Q}_{1},\mathcal{Q}_{2},\ldots\mathcal{Q}_{N}\subseteq\mathcal{J}(k)$ satisfying (\ref{eq:cond_Q}) be such that the following holds\footnote{Again observe that it is sufficient for us to consider $\mathcal{Q}_{i}$s
which satisfy (\ref{eq:cond_Q}) due to the hierarchical structure
of the conditional codebook generation.%
}:\begin{eqnarray}
\exists i \in \{1,2,\dots,N\}:\mathcal{Q}_i \subset \mathcal{J}(k) & & \label{eq:noall_1}
\end{eqnarray}i.e., at least one of the $\mathcal{Q}_{i}$'s is a strict subset of $\mathcal{J}(k)$. Define the set:\[
A_{\mathcal{Q}_{1},\mathcal{Q}_{2},\ldots,\mathcal{Q}_{N}}=\{j_{\Sigma}\}_{\mathcal{J}(k)}\,\,\mbox{: }\forall i\begin{cases}
j_{i,\mathcal{K}}=1 & \mbox{if }\mathcal{K}\in\mathcal{Q}_{i}\\
j_{i,\mathcal{K}}\neq1 & \mbox{otherwise}\end{cases}\]
Then, we can expand (\ref{eq:union_bound}) as:\begin{equation}
P(F)\leq\sum\sum P\bigl(\mathcal{F}(\{j_{\Sigma}\}_{\mathcal{J}(k)})\bigr)\label{eq:union_bound_2}\end{equation}
where the first summation is over all $\{\mathcal{Q}_{1},\ldots,\mathcal{Q}_{N}:\mathcal{Q}_{i}\subseteq\mathcal{J}(k),$
satisfying (\ref{eq:cond_Q}) and (\ref{eq:noall_1})$\}$
and the second summation is over all $\{j_{\Sigma}\}_{\mathcal{J}(k)}\in A_{\mathcal{Q}_{1},\mathcal{Q}_{2},\ldots\mathcal{Q}_{N}}$.
We note that, due to the conditional independence of the codewords
generated, $P\bigl(\mathcal{F}(\{j_{\Sigma}\}_{\mathcal{J}(k)})\bigr)$
is the same for all $\{j_{\Sigma}\}_{\mathcal{J}(k)}\in A(\mathcal{Q}_{1},\mathcal{Q}_{2},\ldots\mathcal{Q}_{N})$,
i.e., $P\bigl(\mathcal{F}(\{j_{\Sigma}\}_{\mathcal{J}(k)})\bigr)$
depends only on $\mathcal{Q}_{1},\mathcal{Q}_{2},\ldots\mathcal{Q}_{N}$. 
We can bound $P(\mathcal{F})$ as:\begin{eqnarray}
P(\mathcal{F})\leq\sum\bigl\{|A_{\mathcal{Q}_{1},\mathcal{Q}_{2},\ldots\mathcal{Q}_{N}}|\nonumber \\
P\bigl(\mathcal{F}(\mathcal{Q}_{i};\forall i\in\Sigma)\bigr)\bigr\}\label{eq:union_bound_3}\end{eqnarray}
where $P(\mathcal{F}(\mathcal{Q}_{i};\forall i\in\Sigma))=$$P\bigl(\mathcal{F}(\mathcal{Q}_{1},\mathcal{Q}_{2},\ldots,\mathcal{Q}_{N})\bigr)$
denotes $P(\mathcal{F}(\{j_{\Sigma}\}_{\mathcal{J}(k)}))$ for some
$\{j_{\Sigma}\}_{\mathcal{J}(k)}\in A_{\mathcal{Q}_{1},\mathcal{Q}_{2},\ldots\mathcal{Q}_{N}}$
and the summation is over all $\{\mathcal{Q}_{1},\ldots,\mathcal{Q}_{N}:\mathcal{Q}_{i}\subseteq\mathcal{J}(k),\mbox{ satisfying (\ref{eq:cond_Q}) and  (\ref{eq:noall_1})} \}$.
We next bound the individual terms in the above product. Recall that
each of the bins $\mathcal{C}_{i,\mathcal{S}}(\cdot)$ have $2^{n(R_{i,\mathcal{S}}^{'}-R_{i,\mathcal{S}}^{''})}$
codewords. Using Lemma 3.1 \cite{Han_Kobayashi}, we can bound both
the terms in the above product as:\begin{eqnarray}
P\bigl(\mathcal{F}(\mathcal{Q}_{i};\forall i\in\Sigma)\bigr)\leq & \frac{2^{nH\big(\{U_{i}\}_{\mathcal{Q}_{i}^{c}}\forall i\big|\{U_{i}\}_{\mathcal{Q}_{i}}\forall i\big)}}{2^{n\sum\limits_{i\in\Sigma}\sum\limits_{\mathcal{K}\in\mathcal{Q}_{i}^{c}}H\big(U_{i,\mathcal{K}}\big|\{U_{i}\}_{\mathcal{I}_{|\mathcal{K}|+}(\mathcal{K})}\big)-4n\epsilon}}\nonumber \\
|A_{\mathcal{Q}_{1},\mathcal{Q}_{2},\ldots\mathcal{Q}_{N}}|\leq & 2^{n\sum\limits_{i\in\Sigma}\sum\limits_{\mathcal{K}\in\mathcal{Q}_{i}^{c}}(R_{i,\mathcal{K}}^{'}-R_{i,\mathcal{K}}^{''})}\label{eq:bound_terms}\end{eqnarray}
where $\mathcal{Q}_{i}^{c}=\mathcal{J}(k)-\mathcal{Q}_{i}$. Substituting
(\ref{eq:bound_terms}) in (\ref{eq:union_bound_3}), it follows that
$P(\mathcal{F})$ can be made arbitrarily small if: $\forall\mathcal{Q}_{1},\mathcal{Q}_{2},\ldots\mathcal{Q}_{N}\subseteq\mathcal{J}(k)$
satisfying (\ref{eq:cond_Q}) and (\ref{eq:noall_1}),\begin{eqnarray*}
\sum_{i\in\Sigma}\sum_{\mathcal{K}\in\mathcal{Q}_{i}^{c}}(R_{i,\mathcal{K}}^{'}-R_{i,\mathcal{K}}^{''})\leq\sum_{i\in\Sigma}\sum_{\mathcal{K}\in\mathcal{Q}_{i}^{c}}H(U_{i,\mathcal{K}}|\{U_{i}\}_{\mathcal{I}_{|\mathcal{K}|+}(\mathcal{K})})\\
-H\left(\{U_{i}\}_{\mathcal{Q}_{i}^{c}}\forall i|\{U_{i}\}_{\mathcal{Q}_{i}}\forall i\right)-4\epsilon\end{eqnarray*}
\begin{equation}
\label{eq:bound_rate_diff}\end{equation}
where $\mathcal{Q}_{i}^{c}=\mathcal{J}(k)-\mathcal{Q}_{i}$. On plugging
in the bounds for $R_{i,\mathcal{K}}^{'}$ from (\ref{eq:proof_1})
into (\ref{eq:bound_rate_diff}), we get (\ref{eq:thm_1}) in Theorem
\ref{thm:main}.

\end{proof}

\section*{Appendix B: Conditional Markov Lemma - For Mutual Covering}
\label{app:cond_mark_lemma}
\begin{figure*}
	\centering
	\subfloat[][Generalized Markov Lemma \cite{Han_Kobayashi}]{\label{fig:HK}\includegraphics[scale=0.55]{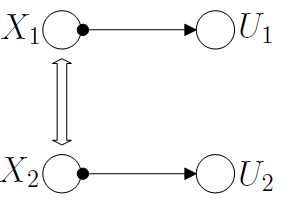}}
	\subfloat[][Conditional Markov Lemma \cite{Wagner}]{\label{fig:Wagner}\includegraphics[scale=0.55]{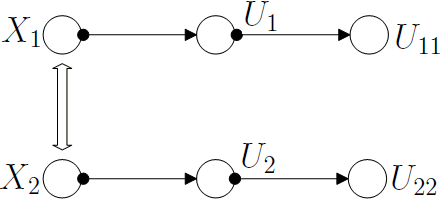}}
	\subfloat[][Conditional Markov Lemma - for Mutual Covering]{\label{fig:Cond_mark_lemma}\includegraphics[scale=0.5]{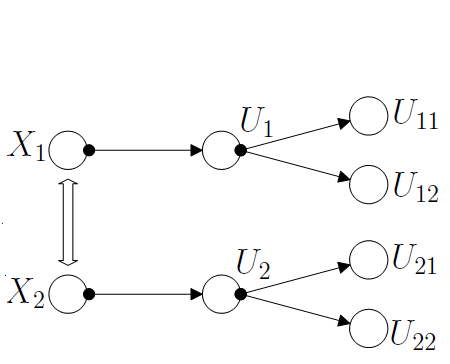}}
	\caption{Depicts the different Markov lemmas.}
\end{figure*}

It was shown in \cite{Han_Kobayashi}\footnote{We note that an earlier Markov Lemma proof appeared 
in \cite{Tung}. However the proof in \cite{Han_Kobayashi} is
easily extendible to more general settings as it is based on standard
typicality arguments. %
} that if a codeword of $U_{1}$ (denoted by $U_{1}^{*}$) is selected jointly
typical with $X_{1}^{n}$ and a codeword of $U_{2}$ (denoted by $U_{2}^{*}$)
is selected jointly typical with $X_{2}^{n}$ and if $U_{1}\leftrightarrow X_{1}\leftrightarrow X_{2}\leftrightarrow U_{2}$,
then $(U_{1}^{*},X_{1}^{n},X_{2}^{n},U_{2}^{*})$ are jointly typical.
This is called the generalized Markov lemma and is depicted in Fig.
\ref{fig:HK}. Similarly, Wagner et al. \cite{Wagner} considered the
case in which codewords of $U_{11}$ and $U_{22}$ are generated conditioned
on codewords of $U_{1}$ and $U_{2}$, respectively. They showed that
if a pair of codewords of $(U_{1},U_{11})$ (denoted by $(U_{1}^{*},U_{11}^{*})$)
are jointly typical with $X_{1}^{n}$ and a pair of codewords
of $(U_{2},U_{22})$ (denoted by $(U_{2}^{*},U_{22}^{*})$) are typical with $X_{2}^{n}$, and if $(U_{1},U_{11})\leftrightarrow X_{1}\leftrightarrow X_{2}\leftrightarrow(U_{2},U_{22})$,
then $(U_{1}^{*},U_{11}^{*},X_{1}^{n},X_{2}^{n},U_{2}^{*}.U_{22}^{*})$
are jointly typical. This is called the conditional Markov lemma for
obvious reasons and is depicted in Fig. \ref{fig:Wagner}. However, these results are not sufficient for our scenario and
we require a stronger version of the conditional Markov lemma. In
what follows, we will establish a series of lemmas, culminating with the needed variant called the conditional Markov lemma for mutual covering (Lemma  \ref{lem:Cond_Mark_Lemma_3}). 
Note that these lemmas can be easily extended to more than $2$ random
variables and layers of encoding. However, we restrict ourselves to
the $2$ variable case to keep the notation simple. We also note that
the lemmas and proofs here are applicable to more general contexts
beyond DIR.
\begin{lem}
\label{lem:Cond_Mark_Lemma_1}Let random variables $(Y,U,V_{1},V_{2})$
be given and let $y^{n}\in\mathcal{T}_{\epsilon}^{n}(Y)$. Let the subset $B_{0}(y^{n})\subset\mathcal{T}_{\epsilon}^{n}(U|y^{n})$ 
be such that:\begin{eqnarray}
2^{n(H(U|Y)-\lambda)}\leq|B_{0}(y^{n})|\leq2^{n(H(U|Y)+\lambda)}\label{eq:Cond_Mark_Lemma_B_Bound-1}\end{eqnarray}
for some $\lambda>0$. For every $u^{n}\in B_{0}(y^{n})$, let subset $B_{12}(y^{n},u^{n})\subset\mathcal{T}_{\epsilon}^{n}((V_{1},V_{2})|u^{n})$ be such that:\begin{equation}
2^{n(H(V_{1},V_{2}|U,Y)-\lambda)}\leq|B_{12}(y^{n},u^{n})|\leq2^{n(H(V_{1},V_{2}|U,Y)+\lambda)}
\label{eq:Cond_Mark_Lemma_B_Bound}\end{equation}and the following hold:\begin{eqnarray}
2^{n(H(V_{1}|U,Y)-\lambda)}\leq|B_{1}(y^{n},u^{n})|\leq2^{n(H(V_{1}|U,Y)+\lambda)}\nonumber \\
2^{n(H(V_{2}|U,Y)-\lambda)}\leq|B_{2}(y^{n},u^{n})|\leq2^{n(H(V_{2}|U,Y)+\lambda)}\nonumber \\
2^{n(H(V_{1}|U,Y,V_{2})-\lambda)}\leq|\hat{B}_{1}(y^{n},u^{n},v_{2}^{n})|\leq2^{n(H(V_{1}|U,Y,V_{2})+\lambda)}\nonumber 
\end{eqnarray}\begin{equation}
2^{n(H(V_{2}|U,Y,V_{1})-\lambda)}\leq|\hat{B}_{2}(y^{n},u^{n},v_{1}^{n})|\leq2^{n(H(V_{2}|U,Y,V_{1})+\lambda)}\label{eq:Cond_Mark_Lemma_B_Bound_2}
\end{equation}

where $\forall(v_{1}^{n},v_{2}^{n})\in B_{12}(y^{n},u^{n})$:\begin{eqnarray}
\hat{B}_{1}(y^{n},u^{n},v_{2}^{n}) & = & \{v_{1}^{n}:(v_{1}^{n},v_{2}^{n})\in B_{12}(y^{n},u^{n})\}\nonumber \\
\hat{B}_{2}(y^{n},u^{n},v_{1}^{n}) & = & \{v_{2}^{n}:(v_{1}^{n},v_{2}^{n})\in B_{12}(y^{n},u^{n})\}\nonumber \\
B_{1}(y^{n},u^{n}) & = & \{v_{1}^{n}:\exists(v_{1}^{n},v_{2}^{n})\in B_{12}(y^{n},u^{n})\}\nonumber \\
B_{2}(y^{n},u^{n}) & = & \{v_{2}^{n}:\exists(v_{1}^{n},v_{2}^{n})\in B_{12}(y^{n},u^{n})\}\end{eqnarray}
Let $R_{0},R_{1}$ and $R_{2}$ be given positive rates. Let $\overline{U}_{j}(j=1,\ldots,2^{nR_{0}})$ be random
variables drawn independently and uniformly from $\mathcal{T}_{\epsilon}^{n}(U)$.
For each $\overline{U}_{j}$, let $\overline{V}_{jk}^{1}(k=1,\ldots,2^{nR_{1}})$
and $\overline{V}_{jk}^{2}(k=1,\ldots,2^{nR_{2}})$ be random variables
drawn independently and uniformly from $\mathcal{T}_{\epsilon}^{n}(V_{1}|\bar{U}_{j})$
and $\mathcal{T}_{\epsilon}^{n}(V_{2}|\bar{U}_{j})$, respectively.
Then for $n$ sufficiently large, \begin{align}
P\left(\nexists j,k_{1},k_{2}:\overline{U}_{j}\in B_{0}(y^{n}),\,(\bar{V}_{jk_{1}}^{1},\bar{V}_{jk_{2}}^{2})\in B_{12}(y^{n},\overline{U}_{j})\right)\nonumber \\
\leq\delta(\epsilon) \label{eq:Cond_Mark_Lemma_Prob_Cond1}\end{align}
 where $\delta(\epsilon)\rightarrow0$ as $\epsilon\rightarrow0$,
if the rates $R_{0}$,$R_{1}$ and $R_{2}$ satisfy: \begin{eqnarray}
R_{0} & \geq & I(Y;U)+7\lambda+19\epsilon\nonumber \\
R_{0}+R_{1} & \geq & I(Y;V_{1},U)+8\lambda+17\epsilon\nonumber \\
R_{0}+R_{2} & \geq & I(Y;V_{2},U)+8\lambda+17\epsilon\nonumber \\
R_{0}+R_{1}+R_{2} & \geq & I(Y;V_{1},V_{2},U)+I(V_{1};V_{2}|U)\nonumber \\
 &  & +6\lambda+15\epsilon\label{eq:Cond_Mark_Lemma_Rate_Cond}\end{eqnarray}

\end{lem}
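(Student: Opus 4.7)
My plan is to apply the second-moment (Chebyshev) method in the spirit of the encoding-error bound used in the proof of Theorem \ref{thm:main}, but adapted to the three-level hierarchical sampling with the mutual covering constraint $(\bar{V}^1_{jk_1}, \bar{V}^2_{jk_2}) \in B_{12}$. For each triple $(j, k_1, k_2)$ define
\begin{equation*}
\chi_{j,k_1,k_2} = \mathbf{1}\bigl\{\bar{U}_j \in B_0(y^n),\ (\bar{V}^1_{jk_1}, \bar{V}^2_{jk_2}) \in B_{12}(y^n, \bar{U}_j)\bigr\}
\end{equation*}
and set $\Psi = \sum_{j,k_1,k_2} \chi_{j,k_1,k_2}$. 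The probability in (\ref{eq:Cond_Mark_Lemma_Prob_Cond1}) is $P(\Psi = 0) \le 4\mathrm{Var}(\Psi)/(E[\Psi])^2$, so the task reduces to a matching exponential lower bound on $E[\Psi]$ and upper bound on $\mathrm{Var}(\Psi)$.

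For the first moment, since $\bar{U}_j$ is uniform on $\mathcal{T}_\epsilon^n(U)$, bound (\ref{eq:Cond_Mark_Lemma_B_Bound-1}) combined with the standard type-class size estimate yields $P(\bar{U}_j \in B_0(y^n)) \ge 2^{-n(I(Y;U) + \lambda + \epsilon)}$. Conditioned on $\bar{U}_j = u^n \in B_0(y^n)$, independence of $\bar{V}^1_{jk_1}$ and $\bar{V}^2_{jk_2}$ (both uniform on their conditional type classes) gives
\begin{equation*}
P\bigl((\bar{V}^1_{jk_1}, \bar{V}^2_{jk_2}) \in B_{12}(y^n, u^n)\bigr) = \frac{|B_{12}(y^n, u^n)|}{|\mathcal{T}_\epsilon^n(V_1|u^n)|\,|\mathcal{T}_\epsilon^n(V_2|u^n)|},
\end{equation*}
which by (\ref{eq:Cond_Mark_Lemma_B_Bound}) and the identity $H(V_1|U) + H(V_2|U) - H(V_1,V_2|U,Y) = I(V_1;V_2|U) + I(Y;V_1,V_2|U)$ is at least $2^{-n(I(V_1;V_2|U) + I(Y;V_1,V_2|U) + \lambda + 2\epsilon)}$. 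Using $I(Y;U) + I(Y;V_1,V_2|U) = I(Y;V_1,V_2,U)$, the product gives $E[\Psi] \gtrsim 2^{n(R_0 + R_1 + R_2 - I(Y;V_1,V_2,U) - I(V_1;V_2|U) - 2\lambda - 3\epsilon)}$. This is exactly the sum-rate condition of (\ref{eq:Cond_Mark_Lemma_Rate_Cond}) (with some slack absorbed into the variance term), and in particular explains the extra $I(V_1;V_2|U)$ penalty relative to a standard covering lemma.

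For the variance, I would partition pairs $((j,k_1,k_2),(j',k_1',k_2'))$ by overlap. Pairs with $j \neq j'$ are independent, so their contributions cancel with $(E[\Psi])^2$. Pairs with $j = j'$ split into subcases depending on whether $k_1 = k_1'$ and/or $k_2 = k_2'$. The case $k_1 \neq k_1'$, $k_2 \neq k_2'$ shares only $\bar{U}_j$; counting compatible $(v_1, v_2)$ pairs through $|B_1|$ and $|B_2|$ gives the constraint on $R_0$. The mixed cases $k_1 = k_1'$, $k_2 \neq k_2'$ (and its symmetric counterpart) fix $\bar{V}^1_{jk_1}$ and count extensions in $\hat{B}_2(y^n, u^n, v_1^n)$ (respectively $\hat{B}_1$), yielding the constraints on $R_0 + R_2$ and $R_0 + R_1$ via the bounds in (\ref{eq:Cond_Mark_Lemma_B_Bound_2}). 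The case where all three indices agree contributes only $E[\Psi]$, which is dominated by the main term.

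The principal obstacle will be the bookkeeping in the two mixed overlap cases: there the conditional independence of $\bar{V}^1, \bar{V}^2$ given $\bar{U}$ interacts with the joint constraint $(\bar{V}^1, \bar{V}^2) \in B_{12}$, which is precisely why the auxiliary subsets $\hat{B}_1, \hat{B}_2$ with their sharp size bounds must be assumed as hypotheses rather than inferred from $|B_{12}|$. The remaining work is to track the $\lambda$ and $\epsilon$ slacks carefully through the ratios of type-class sizes, the $B$-set bounds, and the union over overlap patterns; this recovers the specific constants $7\lambda + 19\epsilon$, $8\lambda + 17\epsilon$, and $6\lambda + 15\epsilon$ in (\ref{eq:Cond_Mark_Lemma_Rate_Cond}). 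Once these estimates are in place, Chebyshev immediately delivers $P(\Psi = 0) \le \delta(\epsilon) \to 0$.
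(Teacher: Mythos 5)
Your plan is the paper's own proof: define the indicator of the joint event $\{\bar{U}_j\in B_0(y^n),(\bar{V}^1_{jk_1},\bar{V}^2_{jk_2})\in B_{12}(y^n,\bar{U}_j)\}$, sum over all triples, apply Chebyshev, lower-bound the first moment via the hypothesized set sizes and type-class sizes (picking up the $I(V_1;V_2|U)$ penalty exactly as you describe), and upper-bound the variance by partitioning pairs of triples according to which of $j,k_1,k_2$ coincide, with the fully-disjoint case cancelling against $(E[\Psi])^2$. Two bookkeeping slips in your case analysis should be fixed in execution. First, for the case sharing only $\bar{U}_j$ (both $k_1\neq k_1'$ and $k_2\neq k_2'$), the two $(V_1,V_2)$ pairs are conditionally independent given $\bar{U}_j$, so the relevant probability is $\bigl(|B_{12}(y^n,\bar{U}_j)|/(|\mathcal{T}_\epsilon^n(V_1|\bar{U}_j)||\mathcal{T}_\epsilon^n(V_2|\bar{U}_j)|)\bigr)^2$, i.e.\ you use the bound on $|B_{12}|$ twice, not $|B_1|$ and $|B_2|$; counting through the product $|B_1||B_2|$ would inflate this case by $2^{2nI(V_1;V_2|U,Y)}$ and force the spurious condition $R_0\geq I(Y;U)+2I(V_1;V_2|U,Y)$, which is stronger than the stated constraint on $R_0$. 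Second, the attribution of the mixed cases is swapped: the case $j=j'$, $k_1=k_1'$, $k_2\neq k_2'$ (shared $\bar{U}_j$ and $\bar{V}^1$, with the two $\bar{V}^2$'s counted through $\hat{B}_2$ and the shared $\bar{V}^1$ through $B_1$) produces the constraint $R_0+R_1\geq I(Y;V_1,U)$, since the $R_2$ contributions cancel against $(E[\Psi])^2$; symmetrically the shared-$\bar{V}^2$ case gives $R_0+R_2\geq I(Y;V_2,U)$. With these corrections the argument is exactly the one in Appendix B of the paper, and the remaining work is only the $\lambda,\epsilon$ accounting you already anticipate.
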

\begin{proof} Define the random variable $\mathcal{X}_{j,k_{1},k_{2}}$ as :\begin{equation}
\mathcal{X}_{j,k_{1},k_{2}}=\begin{cases}
1 & \mbox{if }\bar{U}_{j}\in B_{0}(y^{n}),(\bar{V}_{k_{1}}^{1},\bar{V}_{k_{2}}^{2})\in B_{12}(y^{n},\bar{U}_{j})\\
0 & \mbox{else}\end{cases}\end{equation}
 Denote by $\mathcal{X}=\sum_{j,k_{1},k_{2}}\mathcal{X}_{j,k_{1},k_{2}}$. Observe that the probability in (\ref{eq:Cond_Mark_Lemma_Prob_Cond1}) is equal to $P(\mathcal{X}=0)$. From Chebychev's inequality, we have: \begin{eqnarray}
P\left(\mathcal{X}=0\right)\leq\frac{4Var(\mathcal{X})}{(E[\mathcal{X}])^{2}}\label{eq:Cond_Mark_Cheb}\end{eqnarray}
 Next we have the following from (\ref{eq:Cond_Mark_Lemma_B_Bound-1}) and
(\ref{eq:Cond_Mark_Lemma_B_Bound}):\begin{eqnarray}
E[\mathcal{X}] & = & \sum_{j,k_{1},k_{2}}E[\mathcal{X}_{j,k_{1},k_{2}}]\nonumber \\
 & =^{(a)} & 2^{n(R_{0}+R_{1}+R_{2})}P(\mathcal{X}_{1,1,1})\nonumber \\
 & \geq & 2^{n(R_{0}+R_{1}+R_{2})}\frac{2^{n(H(V_{1},V_{2},U|Y))-2\lambda-5\epsilon)}}{2^{n(H(U)+H(V_{1}|U)+H(V_{2}|U))}}\nonumber \\
\label{eq:Cond_Mark_Ex}\end{eqnarray}
where equality in (a) holds because the random variables $\overline{U}_{j}$, $\overline{V}_{jk}^{1}$ 
and $\overline{V}_{jk}^{2}$ are drawn independently and uniformly from their respective typical sets. Also, using (\ref{eq:Cond_Mark_Lemma_B_Bound}) and (\ref{eq:Cond_Mark_Lemma_B_Bound_2}),
we can bound $E[\mathcal{X}^{2}]$ as:\begin{eqnarray}
E[\mathcal{X}^{2}] & = & \sum_{j^{1},k_{1}^{1},k_{2}^{1}}\sum_{j^{2},k_{1}^{2},k_{2}^{2}}E[\mathcal{X}_{j^{1},k_{1}^{1},k_{2}^{1}}\mathcal{X}_{j^{2},k_{1}^{2},k_{2}^{2}}]\nonumber \\
 & \leq & 2^{n(R_{0}+R_{1}+R_{2})}P_{1}+2^{n(R_{0}+2R_{1}+2R_{2})}P_{2}\nonumber \\
 &  & +2^{n(R_{0}+2R_{1}+R_{2})}P_{3}+2^{n(R_{0}+R_{1}+2R_{2})}P_{4}\nonumber \\
 &  & +2^{2n(R_{0}+R_{1}+R_{2})}P_{1}^{2}\label{eq:Cond_Mark_Ex2}\end{eqnarray}
where \begin{eqnarray}
P_{1} & = & \frac{2^{n(H(V_{1},V_{2},U|Y)+2\lambda+5\epsilon)}}{2^{n(H(U)+H(V_{1}|U)+H(V_{2}|U))}}\nonumber \\
P_{2} & = & \frac{2^{n(H(U|Y)+2H(V_{1},V_{2}|U,Y)+3\lambda+9\epsilon)}}{2^{n(H(U)+2H(V_{1}|U)+2H(V_{2}|U))}}\nonumber \\
P_{3} & = & \frac{2^{n(H(V_{1},U|Y)+2H(V_{2}|U,Y,V_{1})+4\lambda+7\epsilon)}}{2^{n(H(V_{1},U)+2H(V_{2}|U))}}\nonumber \\
P_{4} & = & \frac{2^{n(2H(V_{1}|U,Y,V_{2})+H(V_{2},U|Y))+4\lambda+7\epsilon)}}{2^{n(2H(V_{1}|U)+H(V_{2},U))}}\label{eq:Cond_Mark_p1p2p3}\end{eqnarray}
On substituting (\ref{eq:Cond_Mark_Ex}),(\ref{eq:Cond_Mark_Ex2})
and (\ref{eq:Cond_Mark_p1p2p3}) in (\ref{eq:Cond_Mark_Cheb}), we
have:\begin{eqnarray}
 &  & P\left(\mathcal{X}=0\right)\leq\nonumber \\
 &  & 4\Bigl[\delta(\epsilon)+2^{-n(R_{0}-I(Y;U)-7\lambda-19\epsilon)}\nonumber \\
 &  & +2^{-n(R_{0}+R_{1}-I(Y;V_{1},U)-8\lambda-17\epsilon)}\nonumber \\
 &  & +2^{-n(R_{0}+R_{2}-I(Y;V_{2},U)-8\lambda-17\epsilon)}\nonumber \\
 &  & +2^{-n(R_{0}+R_{1}+R_{2}-I(Y;V_{1},V_{2},U)-I(V_{1};V_{2}|U)-6\lambda-15\epsilon)}\Bigr]\end{eqnarray}
which can be made arbitrarily small if the rates satisfy (\ref{eq:Cond_Mark_Lemma_Rate_Cond}).
\end{proof}
\begin{lem}
\label{lem:Cond_Mark_Lemma_2}Let $W,Y,U,V_{1}$ and $V_{2}$ be random
variables with values in finite sets $\mathcal{W},\mathcal{Y},\mathcal{U},\mathcal{V}_{1}$
and $\mathcal{V}_{2}$, respectively. Let $W^{*}$ be a random variable
with values in $\mathcal{W}^{n}$, such that:\begin{equation}
W^{*}\leftrightarrow Y^{n}\leftrightarrow(U^{n},V_{1}^{n},V_{2}^{n})\label{eq:Cond_Mark_Lemma_Markov Cond}\end{equation}
Let $R_{0}$, $R_{1}$ and $R_{2}$ be given positive rates. Let $\overline{U}_{i}|_{i=1}^{2^{nR_{0}}}$ denote independent
random variables chosen uniformly with replacement from $\mathcal{T}_{\epsilon}^{n}(U)$.
Let $\overline{V}_{i,j}^{1}(i=1,\ldots,2^{R_{0}},\,\, j=1,\ldots,2^{nR_{1}})$
and $\overline{V}_{i,j}^{2}(i=1,\ldots,2^{R_{0}},\,\, j=1\ldots2^{nR_{2}})$
be random variables drawn independently and uniformly from $\mathcal{T}_{\epsilon}^{n}(V_{1}|\overline{U}_{i})$
and $\mathcal{T}_{\epsilon}^{n}(V_{2}|\overline{U}_{i})$, respectively
$\forall i$. Further, let, \begin{equation}
P(W^{*},Y^{n},U^{n},V_{1}^{n},V_{2}^{n}\in\mathcal{T}_{\epsilon}^{n}(W,Y,U,V_{1},V_{2}))\geq1-\eta\label{eq:Cond_Mark_Lemma_Given}\end{equation}
Also, suppose $\forall v_{1}^{n}\in\mathcal{T}_{\epsilon}^{n}(V_{1})$
and $v_{2}^{n}\in\mathcal{T}_{\epsilon}^{n}(V_{2})$:\begin{eqnarray}
P\left((W^{*},Y^{n},U^{n},V_{1}^{n})\in\mathcal{T}_{\epsilon}^{n}\bigl|V_{2}^{n}=v_{2}^{n}\right)\geq1-\eta\nonumber \\
P\left((W^{*},Y^{n},U^{n},V_{2}^{n})\in\mathcal{T}_{\epsilon}^{n}\bigl|V_{1}^{n}=v_{1}^{n}\right)\geq1-\eta\label{eq:Cond_Mark_Lemma_Given_2-1}\end{eqnarray}
Then for $n$ sufficiently large, there exists functions $U^{*}(y^{n})$,
$V_{1}^{*}(y^n,U^{*})$ and $V_{2}^{*}(y^n,U^{*})$, such that:

i) \textup{$\overline{U}^{*}(y^{n})=\overline{U}_{i}$ (for some $i\in\{1,\ldots,2^{R_{0}}\}$)$\Rightarrow V_{1}^{*}(y^{n},\overline{U}^{*})=\bar{V}_{i,j_{1}}^{1}$,
$V_{2}^{*}(y^{n},\overline{U}^{*})=\bar{V}_{i,j_{2}}^{2}$ for some
$j_{1}\in\{1,\ldots,2^{nR_{1}}\}$ and $j_{2}\in\{1,\ldots,2^{nR_{2}}\}$}

ii) \begin{eqnarray}
P((W^{*},Y^{n},\bar{U}^{*},V_{1}^{*},V_{2}^{*})\in\mathcal{T}_{\epsilon} \geq1-\delta(\epsilon)& &\nonumber \\
P((W^{*},Y^{n},\bar{U}^{*},V_{1}^{*})\in\mathcal{T}_{\epsilon}\bigl|V_{2}^{*})\geq1-\delta(\epsilon)& &  \nonumber \\
P((W^{*},Y^{n},\bar{U}^{*},V_{2}^{*})\in\mathcal{T}_{\epsilon}\bigl|V_{1}^{*})\geq1-\delta(\epsilon)& &\label{eq:Cond_Mark_Lemma_Them_Eq}\end{eqnarray}
 for some $\delta(\epsilon)\rightarrow0$ as $\epsilon\rightarrow0$,
if the rates $R_{0}$,$R_{1}$ and $R_{2}$ satisfy: \begin{eqnarray}
R_{0} & \geq & I(Y;U)+40\epsilon\nonumber \\
R_{0}+R_{1} & \geq & I(Y;V_{1},U)+41\epsilon\nonumber \\
R_{0}+R_{2} & \geq & I(Y;V_{2},U)+41\epsilon\label{eq:Cond_Mark_Lemma_Thm_Rate_cond} \\
R_{0}+R_{1}+R_{2} & \geq & I(Y;V_{1},V_{2},U)+I(V_{1};V_{2}|U)+33\epsilon\nonumber\end{eqnarray}
 \end{lem}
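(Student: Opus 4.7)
\textbf{Proof proposal for Lemma \ref{lem:Cond_Mark_Lemma_2}.} The plan is to reduce Lemma \ref{lem:Cond_Mark_Lemma_2} to Lemma \ref{lem:Cond_Mark_Lemma_1} by an appropriate choice of the bin sets $B_0$ and $B_{12}$ that encodes joint typicality with $W^*$ in addition to $Y^n$. First, I would condition on a realization $(w^*, y^n) \in \mathcal{T}_{\epsilon}^{n}(W,Y)$, which holds with probability at least $1-\eta$ by assumption (\ref{eq:Cond_Mark_Lemma_Given}); the conditional statements will be handled analogously using (\ref{eq:Cond_Mark_Lemma_Given_2-1}). Then I would set
\begin{align}
B_0(y^n, w^*) &= \{u^n \in \mathcal{T}_{\epsilon}^{n}(U|y^n) : (w^*, y^n, u^n) \in \mathcal{T}_{\epsilon}^{n}\}, \nonumber \\
B_{12}(y^n, w^*, u^n) &= \{(v_1^n, v_2^n) \in \mathcal{T}_{\epsilon}^{n}((V_1,V_2)|u^n) : (w^*, y^n, u^n, v_1^n, v_2^n) \in \mathcal{T}_{\epsilon}^{n}\},
\end{align}
and define $B_1, B_2, \hat{B}_1, \hat{B}_2$ as the projections/fibers of $B_{12}$ following the convention in Lemma \ref{lem:Cond_Mark_Lemma_1}.

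Next I would verify that the size bounds (\ref{eq:Cond_Mark_Lemma_B_Bound-1})--(\ref{eq:Cond_Mark_Lemma_B_Bound_2}) hold for these sets with some slack $\lambda = O(\epsilon)$. The key ingredient is the Markov condition (\ref{eq:Cond_Mark_Lemma_Markov Cond}), which gives
\begin{equation}
H(U|Y,W) = H(U|Y), \quad H(V_1,V_2|Y,U,W) = H(V_1,V_2|Y,U),
\end{equation}
and similarly $H(V_1|Y,U,V_2,W) = H(V_1|Y,U,V_2)$ and $H(V_2|Y,U,V_1,W) = H(V_2|Y,U,V_1)$. Combined with standard typical set cardinality bounds (specifically the conditional typicality lemma applied to $(W,Y)$-typical sequences), these identities yield precisely the required exponential sizes $2^{nH(U|Y) \pm \lambda}$, $2^{nH(V_1,V_2|U,Y) \pm \lambda}$, and so on.

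With the sets verified, I would invoke Lemma \ref{lem:Cond_Mark_Lemma_1} directly. Under the rate conditions (\ref{eq:Cond_Mark_Lemma_Rate_Cond}), with probability at least $1 - \delta(\epsilon)$, there exist indices $i, j_1, j_2$ such that $\bar{U}_i \in B_0(y^n, w^*)$ and $(\bar{V}_{i,j_1}^1, \bar{V}_{i,j_2}^2) \in B_{12}(y^n, w^*, \bar{U}_i)$. I would then define $\bar{U}^*(y^n), V_1^*(y^n, \bar{U}^*), V_2^*(y^n, \bar{U}^*)$ to be any such indexed codewords (with a fixed tie-breaking rule), which automatically satisfies property (i). By the very definition of the bins, membership in $B_0 \cap B_{12}$ implies $(W^*, Y^n, \bar{U}^*, V_1^*, V_2^*) \in \mathcal{T}_{\epsilon}^{n}$, which gives the first inequality of (\ref{eq:Cond_Mark_Lemma_Them_Eq}). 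For the two conditional inequalities, I would repeat the argument with $V_2^* = v_2^n$ (respectively $V_1^* = v_1^n$) held fixed; the bins would instead require typicality of $(w^*, y^n, u^n, v_1^n, v_2^n)$ with the extra coordinate pinned, and the conditional typicality assumptions (\ref{eq:Cond_Mark_Lemma_Given_2-1}) ensure the analogous Markov-chain computations go through.

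The main obstacle I anticipate is the bookkeeping around slack parameters: translating the $\lambda$-terms and $\epsilon$-terms in (\ref{eq:Cond_Mark_Lemma_Rate_Cond}) into the consolidated $40\epsilon, 41\epsilon, 33\epsilon$ form in (\ref{eq:Cond_Mark_Lemma_Thm_Rate_cond}) requires careful accounting of how the typicality slack propagates through the set definitions, through the entropy identities induced by the Markov chain, and through the Chebyshev bound internal to Lemma \ref{lem:Cond_Mark_Lemma_1}. A secondary subtlety is handling the small atypical contribution (probability $\leq \eta$) from $(W^*, Y^n)$ landing outside $\mathcal{T}_{\epsilon}^{n}(W,Y)$, which must be absorbed into the final $\delta(\epsilon)$; this is routine but needs to be stated cleanly so that the conditional statements in (\ref{eq:Cond_Mark_Lemma_Them_Eq}) follow without hidden dependence on the specific realization chosen.
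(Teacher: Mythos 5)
There is a genuine gap, and it lies at the heart of what makes this a \emph{Markov} lemma. You define the bins $B_{0}$ and $B_{12}$ in terms of joint typicality with the realized pair $(w^{*},y^{n})$, so the codewords you select are functions of $w^{*}$ as well as $y^{n}$. But the lemma explicitly requires selection functions of the form $U^{*}(y^{n})$, $V_{1}^{*}(y^{n},U^{*})$, $V_{2}^{*}(y^{n},U^{*})$ --- and this is not a cosmetic requirement: in the proof of Lemma \ref{lem:Cond_Mark_Lemma_3} this lemma is invoked with $W^{*}=(X_{1}^{n},U_{1}^{*},U_{11}^{*},U_{12}^{*})$ and $Y^{n}=X_{2}^{n}$, where the selection represents the encoding at the second source node, which cannot see $W^{*}$. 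The entire content of the lemma is that codewords chosen based on $y^{n}$ alone are nevertheless jointly typical with $W^{*}$ with high probability; a construction in which the chooser inspects $w^{*}$ proves a different (and, for the distributed application, useless) statement. The paper avoids this by defining the good sets $A_{0}$, $B_{0}(y^{n})$ and $B_{12}(y^{n},u^{n})$ through the \emph{conditional probability} that $W^{*}$ is conditionally typical (thresholds $1-\sqrt{\eta}$, $1-\sqrt[4]{\eta}$, $1-\sqrt[8]{\eta}$, obtained by repeated use of the reverse Markov inequality), so the sets --- and hence the selected codewords --- depend only on $y^{n}$; typicality with $W^{*}$ is then recovered afterwards as a separate high-probability event, via the decomposition $P(E_{1})P(E_{2}\,|\,E_{1})$, using the block Markov condition (\ref{eq:Cond_Mark_Lemma_Markov Cond}) to drop the conditioning on $(u^{n},v_{1}^{n},v_{2}^{n})$ from the law of $W^{*}$.

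A secondary problem is your size-bound argument. The hypotheses give only a block-level Markov chain $W^{*}\leftrightarrow Y^{n}\leftrightarrow(U^{n},V_{1}^{n},V_{2}^{n})$, not a single-letter chain $W\leftrightarrow Y\leftrightarrow(U,V_{1},V_{2})$, so identities such as $H(U|Y,W)=H(U|Y)$ are not available from the stated assumptions; without them your bins have exponents $H(U|W,Y)$, $H(V_{1},V_{2}|W,Y,U)$, etc., and feeding those into Lemma \ref{lem:Cond_Mark_Lemma_1} would yield rate conditions with $I(Y,W;U)$ in place of $I(Y;U)$, which is not what Lemma \ref{lem:Cond_Mark_Lemma_2} claims. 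Moreover, the lower bounds on the fibers of your $w^{*}$-dependent sets do not follow from the hypotheses, which only control probabilities for the random tuple; the paper instead extracts all cardinality bounds from those probability bounds (via the reverse Markov inequality together with Lemma 3.1(f) of Han--Kobayashi), applied separately to (\ref{eq:Cond_Mark_Lemma_Given}), to the marginalized versions (\ref{eq:Cond_Mark_Lemma_derived}), and to (\ref{eq:Cond_Mark_Lemma_Given_2-1}) for the sets $\hat{B}_{1},\hat{B}_{2}$.
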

\begin{proof}
Let us expand (\ref{eq:Cond_Mark_Lemma_Given}) as:\begin{eqnarray}
\sum_{y^{n}\in\mathcal{Y}^{n}}\Bigl\{ P\Bigl((W^{*},U^{n},V_{1}^{n},V_{2}^{n})\in\mathcal{T}_{\epsilon}^{n}\Bigl|Y^{n}=y^{n}\Bigr)\nonumber \\
P(Y^{n}=y^{n})\Bigr\}\geq1-\eta\nonumber \\
\label{eq:Cond_Mark_Lemma_Proof1}\end{eqnarray}
 Let,\begin{eqnarray*}
A\triangleq\Bigl\{ y^{n}:P\Bigl((W^{*},U^{n},V_{1}^{n},V_{2}^{n})\in\mathcal{T}_{\epsilon}^{n}\Bigl|Y^{n}=y^{n}\Bigr)\\
\geq1-\sqrt{\eta}\Bigr\}\end{eqnarray*}
 and \begin{equation}
A_{0} \triangleq A\bigcap\mathcal{T}_{\epsilon}^{n}(Y)\end{equation}
Then using the reverse Markov inequality, we can show that (similar to
\cite{Han_Kobayashi,Wagner}):\begin{equation}
P(Y^{n}\in A_{0})\geq1-\delta_{1}\label{eq:Cond_Mark_Lemma_PA0}\end{equation}
where $\delta_{1}=\sqrt{\eta}+\epsilon$. Then for any $y^{n}\in A_{0}$,
we have:\begin{eqnarray}
\sum_{u^{n}}\Bigl\{ P\Bigl((W^{*},V_{1}^{n},V_{2}^{n})\in\mathcal{T}_{\epsilon}^{n}\Bigl|Y^{n}=y^{n},U^{n}=u^{n}\Bigr)\nonumber \\
P\Bigl(U^{n}=u^{n}\Bigl|Y^{n}=y^{n}\Bigr)\Bigr\}\geq1-\sqrt{\eta}\end{eqnarray}
Let,\begin{eqnarray*}
B(y^{n})\triangleq\Bigl\{ u^{n}:P\Bigl((W^{*},V_{1}^{n},V_{2}^{n})\in\mathcal{T}_{\epsilon}^{n}\Bigl|Y^{n}=y^{n},U^{n}=u^{n}\Bigr)\\
\geq1-\sqrt[4]{\eta}\Bigr\}\end{eqnarray*}
 \begin{equation}
B_{0} \triangleq B\bigcap\mathcal{T}_{\epsilon}^{n}(U|y^{n})\end{equation}
Using the reverse Markov inequality, we again have:\begin{equation}
P\Bigl(U^{n}\in B_{0}(y^{n})\Bigl|Y^{n}=y^{n}\Bigr)\geq1-\delta_{2}\label{eq:Cond_Mark_Lemma_PB0}\end{equation}
where $\delta_{2}=\sqrt[4]{\eta}+\epsilon$. Hence for any $y^{n}\in A_{0}$
and $u^{n}\in B_{0}(y^{n})$ we have: \begin{eqnarray}
\sum_{u^{n},v_{1}^{n},v_{2}^{n}}P\Bigl(V_{1}^{n}=v_{1}^{n},V_{2}^{n}=v_{2}^{n}\Bigl|Y^{n}=y^{n},U^{n}=u^{n}\Bigr)\nonumber \\
Q(y^{n},u^{n},v_{1}^{n},v_{2}^{n})\geq1-\sqrt[4]{\eta}\end{eqnarray}
where we denote by $Q(\mathcal{S})$$=P\Bigl(W^{*}\in\mathcal{T}_{\epsilon}^{n}(W|\mathcal{S})\Bigl|Y^{n}=y^{n}\Bigr)$
for any set of sequences $\mathcal{S}$. Note that we have used the Markov condition (\ref{eq:Cond_Mark_Lemma_Markov Cond})
in the above equation. Now define sets $\tilde{B}_{12}(y^{n},u^{n})$
and $B_{12}(y^{n},u^{n})$ for any $y^{n}\in A_{0}$ and $u^{n}\in B_{0}(y^{n})$
such that:

\begin{eqnarray*}
\tilde{B}_{12}(y^{n},u^{n})\triangleq\Bigl\{(v_{1}^{n},v_{2}^{n}):Q(y^{n},u^{n},v_{1}^{n},v_{2}^{n})\geq1-\sqrt[8]{\eta}\Bigr\}\end{eqnarray*}
 \begin{equation}
B_{12}(y^{n})\triangleq\tilde{B}_{12}(y^{n})\bigcap\mathcal{T}_{\epsilon}^{n}(U,V_{1},V_{2}|y^{n})\label{eq:Cond_Mark_Lemma_B12_Defn}\end{equation}
Then using the reverse Markov inequality, we can show that:\begin{equation}
P\Bigl((V_{1}^{n},V_{2}^{n})\in B_{12}(y^{n})\Bigl|Y^{n}=y^{n},U^{n}=u^{n}\Bigr)\geq1-\delta_{3}\label{eq:Cond_Mark_Lemma_PB12}\end{equation}
where $\delta_{3}=\sqrt[8]{\eta}+\epsilon$. Then from (\ref{eq:Cond_Mark_Lemma_PB0}),
(\ref{eq:Cond_Mark_Lemma_PB12}) and Lemma 3.1(f) in \cite{Han_Kobayashi},
for $n$ sufficiently large, we have: \begin{align}
2^{n(H(U|Y)-3\epsilon)} & \leq|B_{0}(y^{n})|\leq2^{n(H(U|Y)+\epsilon)}\nonumber \\
2^{n(H(V_{1},V_{2}|Y,U)-3\epsilon)} & \leq|B_{12}(y^{n},u^{n})|\leq2^{n(H(V_{1},V_{2}|Y,U)+\epsilon)}\label{eq:Cond_Mark_Lemma_size_B0B12}\end{align}
Note that we have two of the sets required by Lemma \ref{lem:Cond_Mark_Lemma_1}. However,
we further require bounds on the projections of $B_{12}(y^{n},u^{n})$ (as
in (\ref{eq:Cond_Mark_Lemma_B_Bound_2})) to invoke Lemma \ref{lem:Cond_Mark_Lemma_1}.
Towards obtaining these bounds, we note that the following inequalities can be
shown directly from (\ref{eq:Cond_Mark_Lemma_Given}):\begin{eqnarray}
P\left((W^{*},Y^{n},U^{n},V_{1}^{n})\in\mathcal{T}_{\epsilon}^{n}\right)\geq1-\eta\nonumber \\
P\left((W^{*},Y^{n},U^{n},V_{2}^{n})\in\mathcal{T}_{\epsilon}^{n}\right)\geq1-\eta\label{eq:Cond_Mark_Lemma_derived}\end{eqnarray}
Expanding (\ref{eq:Cond_Mark_Lemma_derived}) instead of (\ref{eq:Cond_Mark_Lemma_Given}) and repeating all steps from (\ref{eq:Cond_Mark_Lemma_Proof1}) through (\ref{eq:Cond_Mark_Lemma_size_B0B12}), we obtain:\begin{align}
2^{n(H(V_{1}|Y,U)-3\epsilon)} & \leq|B_{1}(y^{n},u^{n})|\leq2^{n(H(V_{1}|Y,U)+\epsilon)}\nonumber \\
2^{n(H(V_{2}|Y,U)-3\epsilon)} & \leq|B_{2}(y^{n},u^{n})|\leq2^{n(H(V_{2}|Y,U)+\epsilon)}\end{align}
where \begin{align}
B_{1}(y^{n},u^{n}) & =\{v_{1}^{n}:\exists(v_{1}^{n},v_{2}^{n})\in B_{12}(y^{n},u^{n})\}\nonumber \\
B_{2}(y^{n},u^{n}) & =\{v_{2}^{n}:\exists(v_{1}^{n},v_{2}^{n})\in B_{12}(y^{n},u^{n})\}\end{align}
Similarly, it is easy to show that expanding (\ref{eq:Cond_Mark_Lemma_Given_2-1}) instead
of (\ref{eq:Cond_Mark_Lemma_Given}) leads to:\begin{align}
2^{n(H(V_{1}|Y,U,V_2)-3\epsilon)} & \leq|\hat{B}_{1}(y^{n},u^{n},v_{2}^{n})|\leq2^{n(H(V_{1}|Y,U,V_2)+\epsilon)}\nonumber \\
2^{n(H(V_{2}|Y,U,V_{1})-3\epsilon)} & \leq|\hat{B}_{2}(y^{n},u^{n},v_{1}^{n})|\leq2^{n(H(V_{2}|Y,U,V_1)+\epsilon)}\end{align}
where $\forall v_{1}^{n}\in B_{1}(y^{n},u^{n})$ and $v_{2}^{n}\in B_{2}(y^{n},u^{n})$,
\begin{align}
\hat{B}_{1}(y^{n},u^{n},v_{2}^{n}) & =\{v_{1}^{n}:(v_{1}^{n},v_{2}^{n})\in B_{12}(y^{n},u^{n})\}\nonumber \\
\hat{B}_{2}(y^{n},u^{n},v_{1}^{n}) & =\{v_{2}^{n}:(v_{1}^{n},v_{2}^{n})\in B_{12}(y^{n},u^{n})\}\end{align}
We now have sets $B_{0}$ and $B_{12}$ satisfying all the bounds
 as required in Lemma \ref{lem:Cond_Mark_Lemma_1}. Hence, we can define the functions
$U^{*},V_{1}^{*}$ and $V_{2}^{*}$ as follows. $U^{*}(y^{n})=\overline{U}_{i}$
if $\bar{U}_{i}\in B_{0}(y^{n})$. If no such $\bar{U}_{i}$ exists,
we set $U^{*}(y^{n})=\bar{U}_{1}$. Next, if there exists a pair $(\overline{V}_{i,j_{1}}^{1},\overline{V}_{i,j_{2}}^{2})$
such that $(\overline{V}_{i,j_{1}}^{1},\overline{V}_{i,j_{2}}^{2})\in B_{12}(y^{n},\overline{U}_{i})$,
then define $(V_{1}^{*}(y^{n},U^{*}),V_{2}^{*}(y^{n},U^{*}))=(\overline{V}_{i,j_{1}}^{1},\overline{V}_{i,j_{2}}^{2})$.
If there exists no such pair, define $(V_{1}^{*}(y^{n},U^{*}),V_{2}^{*}(y^{n},U^{*}))=(\overline{V}_{i,1}^{1},\overline{V}_{i,1}^{2})$.

It follows from the rate conditions in (\ref{eq:Cond_Mark_Lemma_Thm_Rate_cond}),
Lemma \ref{lem:Cond_Mark_Lemma_1} with $\lambda=3\epsilon$ and the
bounds on set sizes that:\begin{align}
P\left(U^{*}\in B_{0}(Y^{n}),(V_{1}^{*},V_{2}^{*})\in B_{12}(Y^{n},U^{*})\Bigl|Y^{n}\in A_{0}\right)\nonumber \\
\geq1-\delta(\epsilon)\end{align}
for some $\delta(\epsilon)\rightarrow0$ as $\epsilon\rightarrow0$.
Note that $y^{n}\in A_{0}$, $U^{*}\in B_{0}(Y^{n})$ and $(V_{1}^{*},V_{2}^{*})\in B_{12}(Y^{n},U^{*})$
imply that $(y^{n},U^{*},V_{1}^{*},V_{2}^{*})\in\mathcal{T}_{\epsilon}^{n}(Y,U,V_{1},V_{2})$.
We then have,\begin{equation}
P(W^{*},Y^{n},U^{*},V_{1}^{*},V_{2}^{*}\in T_{\epsilon}^{n})\geq P(E_{1})P(E_{2}|E_{1}) \label{eq:bound_p_1}\end{equation}
where events $E_{1}$ and $E_{2}$ are defined as: \begin{eqnarray}
E_{1} & = & \{Y^{n}\in A_{0},U^{*}\in B_{0},(V_{1}^{*},V_{2}^{*})\in B_{12}\}\nonumber \\
E_{2} & = & \{W^{*}\in T_{\epsilon}^{n}(W|Y^{n},U^{*},V_{1}^{*},V_{2}^{*})\}\end{eqnarray}
From (\ref{eq:Cond_Mark_Lemma_PA0}), (\ref{eq:Cond_Mark_Lemma_PB12})
and (\ref{eq:Cond_Mark_Lemma_B12_Defn}), we obtain bounds on $P(E_{1})$
and $P(E_{2}|E_{1})$:\begin{eqnarray}
P(E_{1}) & \geq & 1-\delta_{1}-\delta_{2}-\delta_{3}\nonumber \\
P(E_{2}|E_{1}) & \geq & 1-\sqrt[8]{\eta}\end{eqnarray}
On substituting in (\ref{eq:bound_p_1}), we obtain the first bound in (\ref{eq:Cond_Mark_Lemma_Them_Eq}). The
other two bounds in (\ref{eq:Cond_Mark_Lemma_Them_Eq}) can be shown using similar arguments.\end{proof}
\begin{lem}
\label{lem:Cond_Mark_Lemma_3}{\bf Conditional Markov Lemma - for Mutual Covering}: Suppose that $(X_{1},X_{2},U_{1},U_{2},U_{11},U_{12},U_{21},U_{22})$
are random variables taking values in arbitrary finite sets $(\mathcal{X}_{1},\mathcal{X}_{2},\mathcal{U}_{1},\mathcal{U}_{2},\mathcal{U}_{11},\mathcal{U}_{12},\mathcal{U}_{21},\mathcal{U}_{22})$, 
respectively. Let the random variables satisfy the following Markov
condition:\begin{equation}
(U_{1},U_{11},U_{12})\leftrightarrow X_{1}\leftrightarrow X_{2}\leftrightarrow(U_{2},U_{21},U_{22})\label{eq:LL_Mark_Cond}\end{equation}
 Let $\overline{U}_{1,i}:i=1,\ldots,2^{nR_{1}}$ and $\overline{U}_{2,i}:i=1,\ldots,2^{nR_{2}}$
be independent codewords of length $n$ each generated using the marginals
$P(U_{1})$ and $P(U_{2})$, respectively. Let $2^{nR_{11}}$ and $2^{nR_{12}}$
codewords of $U_{11}$ and $U_{12}$ (denoted by $\overline{U}_{11,ij}$
and $\overline{U}_{12,ij}$), respectively, be generated conditioned
on each codeword $\overline{U}_{1,i}$. Similarly generate codewords
of $U_{21}$ and $U_{22}$ at rates $R_{21}$ and $R_{22}$, respectively, 
conditioned on the codewords of $U_{2}$. Then for $n$ sufficiently
large, there exists functions $U_{1}^{*}(X_{1}^{n})$,$U_{2}^{*}(X_{2}^{n})$,
$U_{11}^{*}(X_{1}^{n},U_{1}^{*})$,$U_{12}^{*}(X_{1}^{n},U_{1}^{*})$,$U_{21}^{*}(X_{2}^{n},U_{2}^{*})$
and $U_{22}^{*}(X_{2}^{n},U_{2}^{*})$ taking values in $\mathcal{U}_{1}^{n},\mathcal{U}_{2}^{n},\mathcal{U}_{11}^{n},\mathcal{U}_{12}^{n},\mathcal{U}_{21}^{n}$
and $\mathcal{U}_{22}^{n}$, respectively, such that:\begin{equation}
P\left((X_{1}^{n},X_{2}^{n},U_{1}^{*},U_{2}^{*},U_{11}^{*},U_{12}^{*},U_{21}^{*},U_{22}^{*})\in\mathcal{T}_{\epsilon}^{n}\right)\geq1-\delta(\epsilon)\label{eq:LL_TP}\end{equation}
 where $\delta(\epsilon)\rightarrow0$ as $\epsilon\rightarrow0$
if the rates satisfy:\begin{eqnarray}
  R_{1}&>&I(X_{1};U_{1}),\nonumber\\ R_{2}&>&I(X_{2};U_{2}) \nonumber\\
  R_{1}+R_{11}&>&I(X_{1};U_{11},U_{1}),\nonumber \\ R_{1}+R_{12}&>&I(X_{1};U_{12},U_{1}),\nonumber \\
  R_{2}+R_{21}&>&I(X_{2};U_{21},U_{2}),\nonumber \\ R_{2}+R_{22}&>&I(X_{2};U_{22},U_{2}),\nonumber \\
  R_{1}+R_{11}+R_{12}&>&I(X_{1};U_{11}U_{12},U_{1})\nonumber \\ & &+I(U_{11};U_{12}|U_{1}),\nonumber \\
  R_{2}+R_{22}+R_{21}&>&I(X_{2};U_{21},U_{22},U_{2})\nonumber \\ & &+I(U_{21};U_{22}|U_{2})\label{eq:LL_Rate_Constaint}\end{eqnarray}

\end{lem}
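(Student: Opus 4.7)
The plan is to prove Lemma \ref{lem:Cond_Mark_Lemma_3} by two successive applications of Lemma \ref{lem:Cond_Mark_Lemma_2}, one for each source, using the source Markov chain (\ref{eq:LL_Mark_Cond}) to glue the two halves together. Throughout, I will rely on the fact that the codebook randomness on side $1$ (i.e., the codebooks for $U_1,U_{11},U_{12}$) is generated independently of the codebook randomness on side $2$ (i.e., the codebooks for $U_2,U_{21},U_{22}$), and each is independent of the source sequence on the opposite side.

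First I would apply Lemma \ref{lem:Cond_Mark_Lemma_2} to the side-$2$ codebooks with $Y^n=X_2^n$, $U^n=U_2^n$, $V_1^n=U_{21}^n$, $V_2^n=U_{22}^n$, and with $W^*$ taken to be a constant. The Markov hypothesis $W^*\leftrightarrow Y^n\leftrightarrow(U^n,V_1^n,V_2^n)$ is then trivial, and the typicality hypotheses (\ref{eq:Cond_Mark_Lemma_Given}) and (\ref{eq:Cond_Mark_Lemma_Given_2-1}) reduce to standard joint and conditional $\epsilon$-typicality of iid draws from $P(X_2,U_2,U_{21},U_{22})$, which hold with high probability by the AEP and the conditional typicality lemma. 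The rate restrictions required by Lemma \ref{lem:Cond_Mark_Lemma_2} specialize exactly to the four side-$2$ inequalities in (\ref{eq:LL_Rate_Constaint}). The output is a pair of functions $U_2^*(X_2^n)$ and $(U_{21}^*,U_{22}^*)(X_2^n,U_2^*)$ selecting codewords from the prescribed codebooks such that $(X_2^n,U_2^*,U_{21}^*,U_{22}^*)\in\mathcal{T}_\epsilon^n$ with probability at least $1-\delta(\epsilon)$.

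Next I would apply Lemma \ref{lem:Cond_Mark_Lemma_2} a second time, to the side-$1$ codebooks with $Y^n=X_1^n$, $U^n=U_1^n$, $V_1^n=U_{11}^n$, $V_2^n=U_{12}^n$, and crucially with the witness $W^*=(X_2^n,U_2^*,U_{21}^*,U_{22}^*)$ produced in Step 1. The required Markov condition $W^*\leftrightarrow X_1^n\leftrightarrow(U_1^n,U_{11}^n,U_{12}^n)$ holds because (i) $X_2^n$ is conditionally independent of side-$1$ sequences given $X_1^n$ by (\ref{eq:LL_Mark_Cond}) together with iid sampling, and (ii) $(U_2^*,U_{21}^*,U_{22}^*)$ are deterministic functions of $X_2^n$ and the side-$2$ codebooks, both of which are independent of $X_1^n$ and the side-$1$ codebooks. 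The typicality hypotheses (\ref{eq:Cond_Mark_Lemma_Given}) and (\ref{eq:Cond_Mark_Lemma_Given_2-1}) for this $W^*$ are obtained by combining the Step-$1$ conclusion with a conditional Markov argument: the source Markov chain together with the independence of the two codebook sides implies that the full eight-tuple $(X_1^n,X_2^n,U_1^n,U_{11}^n,U_{12}^n,U_2^*,U_{21}^*,U_{22}^*)$ is jointly typical with high probability, and remains so after conditioning on any fixed typical realization of $U_{11}^n$ or $U_{12}^n$. Invoking Lemma \ref{lem:Cond_Mark_Lemma_2} then produces functions $U_1^*,U_{11}^*,U_{12}^*$ for which the full eight-tuple lies in $\mathcal{T}_\epsilon^n$, under the remaining four side-$1$ rate inequalities of (\ref{eq:LL_Rate_Constaint}), completing the proof of (\ref{eq:LL_TP}).

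The main technical obstacle is the verification of the strengthened typicality hypothesis (\ref{eq:Cond_Mark_Lemma_Given_2-1}) for the non-iid witness $W^*=(X_2^n,U_2^*,U_{21}^*,U_{22}^*)$ in Step 2. Because $U_2^*,U_{21}^*,U_{22}^*$ are selected from covering codebooks rather than drawn iid, the induced conditional law of $W^*$ given $X_1^n$ and a fixed typical $u_{11}^n$ or $u_{12}^n$ is only approximately the product law on $\mathcal{T}_\epsilon^n$, and I would control the corresponding failure probabilities by the same reverse-Markov-inequality argument and set-size bounds that underlie the proof of Lemma \ref{lem:Cond_Mark_Lemma_2}. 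Once these bounds are in place, the full joint typicality claim falls out immediately from the second invocation of Lemma \ref{lem:Cond_Mark_Lemma_2}.
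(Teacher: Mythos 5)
Your proposal is correct and is essentially the paper's own argument in mirror image: the paper first covers side 1 via the standard mutual-covering (multiple-descriptions) result, brings in the other source through the conditional draw of $X_2^n$, and then invokes Lemma \ref{lem:Cond_Mark_Lemma_2} with the non-iid witness $W^{*}=(X_1^n,U_1^{*},U_{11}^{*},U_{12}^{*})$ — exactly your two-stage scheme with the roles of the two sides interchanged and with Lemma \ref{lem:Cond_Mark_Lemma_2} (constant $W^{*}$) standing in for the first covering step, yielding the same rate constraints (\ref{eq:LL_Rate_Constaint}). The technical point you flag, verifying (\ref{eq:Cond_Mark_Lemma_Given}) and (\ref{eq:Cond_Mark_Lemma_Given_2-1}) for the selected (non-iid) witness against the fictitious conditionally-iid auxiliary sequences of the other side, is precisely the step the paper itself delegates to a citation of \cite{Gamal_notes}, so your treatment is no less complete than the paper's.
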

Note that this lemma can be easily extended to the more general case
of multiple random variables and multiple layers of encoding using induction (see \cite{Han_Kobayashi} for the general
methodology). While we use the more general version in the proof of Theorem
\ref{thm:main} in Appendix \ref{app:Proof-of-Theorem}, we restrict 
to the simpler case here for ease of understanding and to avoid
complex notation.
\begin{proof}
We note that from standard arguments \cite{EGC,Gamal_notes,VKG},
it follows that if the rates satisfy (\ref{eq:LL_Rate_Constaint}),
then there exists functions $U_{1}^{*}(X_{1}^{n})$, $U_{11}^{*}(X_{1}^{n},U_{1}^{*})$
and $U_{12}^{*}(X_{1}^{n},U_{1}^{*})$ such that: \begin{equation}
P\left((X_{1}^{n},U_{1}^{*},U_{11}^{*},U_{12}^{*})\in\mathcal{T}_{\epsilon}^{n}\right)\geq1-\delta(\epsilon)\label{eq:LL_MD_eq}\end{equation}
 for some $\delta(\epsilon)\rightarrow0$ as $\epsilon\rightarrow0$.
Also, note that $X_{2}^{n}$ is drawn according to the right conditional
PMF given $X_{1}^{n}$. Hence, we have:\begin{equation}
P\left((X_{1}^{n},X_{2}^{n},U_{1}^{*},U_{11}^{*},U_{12}^{*})\in\mathcal{T}_{\epsilon}^{n}\right)\geq1-\delta(\epsilon)\label{eq:LL_MD_eq2}\end{equation}
 What remains for us to show is that there exists functions $U_{2}^{*}(X_{2}^{n})$,
$U_{21}^{*}(X_{2}^{n},U_{2}^{*})$ and $U_{22}^{*}(X_{2}^{n},U_{2}^{*})$,
taking values in $\mathcal{U}_{2}^n,\mathcal{U}_{21}^n,\mathcal{U}_{22}^n$,
jointly typical with $X_{1}^{n},X_{2}^{n},U_{1}^{*},U_{11}^{*},U_{12}^{*}$.
We invoke Lemma \ref{lem:Cond_Mark_Lemma_2} with $W^{*}=(X_{1}^{n},U_{1}^{*},U_{11}^{*},U_{12}^{*})$,
$Y^{n}=X_{2}^{n}$, $U=U_{2}$, $V_{1}=U_{21}$ and $V_{2}=U_{22}$.
Note that given (\ref{eq:LL_Rate_Constaint}) and (\ref{eq:LL_MD_eq2}),
conditions (\ref{eq:Cond_Mark_Lemma_Markov Cond}),(\ref{eq:Cond_Mark_Lemma_Given})
and (\ref{eq:Cond_Mark_Lemma_Given_2-1}) are satisfied (for a formal
proof of this claim, refer to \cite{Gamal_notes}). Hence, it follows
from Lemma \ref{lem:Cond_Mark_Lemma_2} that there exist functions $U_{2}^{*}(X_{2}^{n})$,  
$U_{21}^{*}(X_{2}^{n},U_{2}^{*})$ and $U_{22}^{*}(X_{2}^{n},U_{2}^{*})$
such that:\begin{equation}
P\left((X_{1}^{n},X_{2}^{n},U_{1}^{*},U_{2}^{*},U_{11}^{*},U_{12}^{*},U_{21}^{*},U_{22}^{*})\in\mathcal{T}_{\epsilon}^{n}\right)\geq1-\delta(\epsilon)\label{eq:LL_Proved}\end{equation}
 thus proving the lemma.
\end{proof}

\end{document}